\newcommand{\RR}{\mathbb R}
\newcommand{\AAA}{\mathcal A}
\newcommand{\VVV}{\mathcal V}
\newcommand{\argmax}{\operatorname{argmax}}
\newcommand{\Ind}{\operatorname{Id}}
\newcommand{\update}{\operatorname{UPDATE}}
\newcommand{\bin}{\operatorname{bin}}
\newcommand{\updatee}{\operatorname{update}}
\newcommand{\final}{\operatorname{final}}
\newcommand{\setup}{\operatorname{setup}}
\newcommand{\game}{budgeted game\xspace}
\newcommand{\ignore}[1]{}
\newif\ifcomments
\newcommand{\smriti}[1]{{\color{blue}#1}}
\newcommand{\udi}[1]{{\color{red}#1}}
\newcommand{\anthony}[1]{{\color{green}#1}}
\newcommand{\smriti}[1]{}
\newcommand{\udi}[1]{}
\newcommand{\anthony}[1]{}
\newcommand{\junk}[1]{}
\newif\ifshowqed
\begin{document}
\title{The Shapley Value in Knapsack Budgeted Games}
\author{Smriti Bhagat\inst{1} \and Anthony Kim\inst{2}\thanks{This work was done while the author was an intern at Technicolor Research Lab. Supported in part by an NSF Graduate Research Fellowship.} \and S. Muthukrishnan\inst{3} \and Udi Weinsberg\inst{1} }
\institute{
Technicolor Research\\
\email{\{smriti.bhagat,udi.weinsberg\}@technicolor.com}
\and
Department of Computer Science, Stanford University \\ 
\email{tonyekim@stanford.edu}
\and
Rutgers University\\
\email{muthu@cs.rutgers.edu}}

\maketitle

\begin{abstract}
We propose the study of computing the Shapley value for a new class of cooperative games that we call budgeted games, and investigate in particular knapsack budgeted games, a version modeled after the classical knapsack problem. In these games, the ``value'' of a set $S$ of agents is determined only by a critical subset $T\subseteq S$ of the agents and not the entirety of $S$ due to a budget constraint that limits how large $T$ can be. We show that the Shapley value can be computed in time faster than by the na\"ive exponential time algorithm when there are sufficiently many agents, and also provide an algorithm that approximates the Shapley value within an additive error. For a related budgeted game associated with a greedy heuristic, we show that the Shapley value can be computed in pseudo-polynomial time. Furthermore, we generalize our proof techniques and propose what we term algorithmic representation framework that captures a broad class of cooperative games with the property of efficient computation of the Shapley value. The main idea is that the problem of determining the efficient computation can be reduced to that of finding an alternative representation of the games and an associated algorithm for computing the underlying value function with small time and space complexities in the representation size. 
\end{abstract}

\section{Introduction}\label{sec:intro}
The Shapley value is a well-studied solution concept for fair distribution of profit among agents in cooperative game theory. Given a coalition of agents that collectively generate some profit, fair distribution is important to maintain a stable coalition such that no subgroup of agents has an incentive to unilaterally deviate and form its own coalition. While the Shapley value is not a stability concept, it uniquely satisfies a set of desirable properties for fair profit distribution based on individual contributions. It has been shown useful on a wide range of cooperative games and, more recently, applied beyond the game-theoretic setting in problems related to social networks~\cite{Narayanam,Michalak} and computer networks~\cite{Ric07Internet,Misra:2010}. 

Efficient --- (pseudo) polynomial time --- computation of the Shapley value has been studied for many classes of cooperative games. One such example is weighted voting games that model parliamentary voting where agents are parties, the weight of each party is the number of the same party representatives, and a coalition of parties is winning (has value 1) if its total weight is at least some quota, or losing (has value 0) otherwise. It was shown that computing the Shapley value in the weighted majority games, where the quota is half the total weight of all the agents, is \#P-complete~\cite{Deng} and NP-hard~\cite{Matsui2}. Note, however, that there is a pseudo-polynomial time algorithm using dynamic programming~\cite{Matsui}.

In another line of research, representation schemes for cooperative games have been proposed in ~\cite{Conitzer,Ieong,Ieong2,Aadithya}; if a given cooperative game has a small alternative representation in one of these schemes, then the Shapley value can be computed efficiently in time polynomial in the size of the alternative representation. For example, we can represent a given cooperative game as a collection of smaller cooperative games in multi-issue representation~\cite{Conitzer}, or in terms of logic rules in marginal contribution net representation~\cite{Ieong}.

We propose a new class of cooperative games that we call {\em budgeted games} and study the Shapley value computation in these games. In cooperative games, the value function $v(S)$ for a coalition $S$ is determined by all the agents in $S$, but may explicitly depend on a sub-coalition in some domains (e.g., \cite{Bachrach2,Bachrach3,Aziz}). We study value functions conditioned on a budget $B$ where $v(S)$ may be totally determined by a potentially strict subset $T\subset S$ of agents. That is, budget $B$ models a physical or budget constraint that may limit the actual value of a coalition to be less than simply the total aggregate value of all the individual contributions and, hence, the profit generation of a coalition is determined only by a sub-coalition of the agents. There are many examples we can readily formulate as budgeted games to model real-life scenarios:  

\begin{itemize}
\item {\em (Graph Problems)}
Consider a network of nodes that correspond to facilities and edges between them that correspond to communication links. This can be modeled as 
a graph $G$ with weights on nodes. For any subset  $S$ of nodes, $v_B(S)$,  the value created by set $S$ under budget $B$, may be the maximum weight of an independent set of  at most $B$ nodes.
\item {\em (Set Problems)}
Let each agent be a sales agent targeting a specific set of customers. Then $v_B(S)$ may be the maximum number of customers that can be targeted by a subset of size at most $B$ of sales agents from $S$. 
\item {\em (Packing Problems)}
Consider creating a task force from a pool $S$ of avaliable agents where each agent is associated with some value and cost. Then $v_B(S)$ may be the largest total aggregate value from a subset of the agents with total cost at most $B$.

\item {\em (Data Mining Problems\footnote{This is the Shapley value computation problem for what is commonly known as the Top-$k$ problem.})}
Let each agent represent a document with some quality measure with respect to a fixed search query. We may approximate the total value of an ordered list of documents $S$, ordered by the quality measure, by those that appear at the top of the list. Then, the corresponding $v_B(S)$ is the sum of the top $B$ quality scores of documents in $S$.
\end{itemize}

For the Shapley value to be useful in value division problems modeled as budgeted games, we cannot simply apply the formula for the Shapley value as it would lead to an exponential time algorithm. Hence, it is important to understand its computational complexity in these games, and we study the knapsack version (equivalently, Packing Problems) in this paper. As far as we know, the budgeted games have not been studied previously. A related class of games called bin-packing games~\cite{binpacking,binpacking2,binpacking3} has been studied for different solution concepts of core and $\epsilon$-core.\footnote{While items and bins separate and bins model linear constraints in knapsack budgeted games, both items and bins are treated as agents and the goal is to share profit among them in a fair way in bin-packing games.} 

\paragraph{\bf Our Contributions.}

First, we propose a new class of cooperative games, {\em budgeted games}, and investigate the computational complexity of the Shapley value in a particular version of budgeted games. Second, we generalize our proof techniques and propose a general framework, {\em algorithmic representation}, for cooperative games. We note that all our algorithms have running times with a polynomial dependence on the number of agents. More specifically, our contributions are as follows:

\begin{itemize}
\item We study the knapsack version of \game{s} and show that computing the Shapley value in these games is NP-hard. On the other hand, we show that the Shapley value can be computed in time faster than by the na\"ive exponential time algorithm when there are sufficiently many agents. 

\item We provide an additive approximation scheme for the Shapley value via rounding; our approach does not use the standard sampling and normal distribution techniques~\cite{Bachrach,Fatima} in estimating the Shapley value. 

\item We consider the value function obtained by a 2-approximation greedy algorithm for the classical knapsack problem and show that for this function, the Shapley value can be computed in pseudo-polynomial time. 

\item We provide generalizations and present the algorithmic representation framework that captures a broad class of cooperative games with the property of efficient computation of the Shapley value. This includes many known classes of cooperative games in~\cite{Deng,Matsui,Michalak} and those with concise representations using schemes in~\cite{Conitzer,Ieong,Aadithya}.
 
\end{itemize}


\section{Preliminaries}
We represent the profit distribution problem as a {\em cooperative game} $(N,v)$ where $N$ is the set of agents and $v:2^N \rightarrow \RR$ is the {\em characteristic function} that assigns a value to each subset of agents, with $v(\emptyset)=0$. We also call $v$ the {\em value function} and use both characteristic and value functions interchangeably. For a subset of agents $S \subseteq N$, we interpret $v(S)$ as the value that these agents can generate collectively; $v(N)$ is the total value that the whole group generates.

The {\em Shapley value}~\cite{Shapley} is a solution concept based on marginal contributions that divides the total value $v(N)$ into individual shares $\phi_1, \ldots, \phi_{|N|}$ satisfying an intuitive notion of fairness. For $i\in N$ and $S\subseteq N\setminus \{i\}$, we define agent $i$'s {\em marginal contribution to $S$} to be $v(S\cup \{i\}) - v(S)$. 
The Shapley value is the unique profit distribution solution that satisfies the following properties:
\begin{enumerate}
\item (Efficiency) $\sum_{i\in N} \phi_i(v) = v(N)$; 
\item (Symmetry) If $v(S\cup \{i\}) - v(S) = v(S\cup \{j\}) - v(S)$ for all $S \subseteq N\setminus \{i,j\}$, then $\phi_i(v) = \phi_j(v)$;
\item (Null Player) If $v(S\cup \{i\}) - v(S) = 0$ for all $S\subseteq N\setminus \{i\}$, then $\phi_i(v) = 0$;
\item (Linearity) For any two cooperative games $(N,v)$ and $(N,w)$ and their combined game $(N, v+w)$, $\phi_i(v) + \phi_i(w) = \phi_i(v+w)$ for all $i\in N$.
\end{enumerate}

The Shapley value for each agent $i$ is computed as
\begin{equation}\label{eq:shapley_set}
\phi_i(v) = \sum_{S \subseteq N\setminus \{i\}} \frac{|S|! (|N|-|S|-1)!}{|N|!} (v(S\cup \{i\}) - v(S)).
\end{equation}

Note the Shapley value is a weighted average of agent $i$'s marginal contributions. Equivalently, it can also be computed as $\phi_i(v) = \frac{1}{|N|!} \sum_{\pi \in \Pi} v(P_\pi^i \cup \{i\}) - v(P_\pi^i)$, where $\Pi$ is the set of all $|N|!$ permutations of the agents and $P_\pi^i$ is the set of agents preceeding agent $i$ in the order represented by permutation $\pi$. 

There are two sources of computational complexity in the Shapley value: an exponential number of terms in the summation and individual evaluations of the characteristic function $v$. Directly applying the above equations leads to a na\"ive algorithm with running time at least exponential in the number of agents, $\Omega(2^{|N|})$; furthermore, each individual evaluation of $v$ can be expensive.


\section{Knapsack Budgeted Games}

A {\em knapsack \game} $(N,v)$ is a cooperative game with the alternative representation given by a nonnegative integer tuple $(\{(l_1, w_1), \ldots, (l_{|N|}, w_{|N|})\}, l_{\bin})$ such that $v(S) = \max_{S'\subseteq S: l(S')\leq l_{\bin}} w(S')$ for all $S\subseteq N$, where $l(S') = \sum_{k\in S'} l_k$ and $w(S') = \sum_{k\in S'} w_k$. Each agent $i$ is described by $(l_i, w_i)$ where $l_i$ and $w_i$ are the agent's length and weight, respectively. The variable $l_{\bin}$ is the bin size that restricts which set of agents can directly determine the value function $v$. For a set of agents $S$, the value $v(S)$ is determined by solving an optimization problem where the total value of selected agents, possibly a strict subset of $S$, is optimized subject to a budget constraint; the other unselected agents do not contribute explicitly. Note the similarities with the classical knapsack problem in which the objective is to find the maximum total value of items that can be packed into a fixed size bin. 

Knapsack budgeted games are useful when the characteristic function $v$ of a cooperative game can be modeled as the objective value of an optimization problem subject to linear constraints. In this paper, we only consider the games with a single linear constraint, but our results extend to knapsack budgeted games with multiple linear constraints. In a knapsack budgeted game with multiple budget constraints, each agent is associated with a length vector $\vec{l} = (l^1, \ldots, l^d)$ and a weight and there is a budget constraint on each coordinate, i.e., $l^1_{\bin}, \ldots, l^d_{\bin}$, assuming $d$ budget constraints. 

For an application, we can use knapsack budgeted games and the Shapley value to model value division in a sport team. We would like to give out bonuses proportional to the Shapley value solution. Assume each player $i$ is associated with a skill level $w_i$ and, in a game of the sport, at most $B$ players from each team can play. We model the value of the team as the total aggregate skill level of its best $B$ players, since they usually start and play the majority of the games. Then, this is a knapsack budgeted game with skill levels as weights, unit lengths, and $l_{\bin} = B$. Note the Shapley value of a player not in the top $B$ may be positive. Since he is still contributing to the team as a reserve player and might be one of the top $B$ players in a subset of the team, say available players in an event of injury, he should be compensated accordingly.

In the following sections, we assume that the knapsack budgeted game $(N,v)$ has the representation $(\{(l_1, w_1), \ldots, (l_{|N|}, w_{|N|})\}, l_{\bin})$. We define $w_{\max} = \lceil l_{\bin} \cdot \max_i w_i/l_i \rceil$, which is an upper bound on the value $v(N)$. We use shorthand notations $l(S) = \sum_{k\in S} l_k$ and $w(S) = \sum_{k\in S} w_k$ for any subset $S$. The set of agents are ordered and labeled with $1, \ldots, |N|$. For a set of agents $X$ and two integers $a$ and $b$, we use $X_{a,b}$ to denote the subset $\{i\in X: a\leq i \leq b\}$. To avoid degenerate cases, we further assume $0<l_i \leq l_{\bin}$ for all $i$. We use the indicator function $\Ind$ that equals to 1 if all the input conditions hold, or 0 otherwise.

\section{The Shapley Value in Knapsack Budgeted Games}\label{sec:shapley}
We present a hardness result, an algorithm for computing the Shapley value exactly, and a deterministic approximation scheme that approximates within an additive error. 
 
\subsection{Exact Computation}

By the NP-completeness of the classical knapsack problem and the efficiency property of the Shapley value, it follows that (see Appendix~\ref{sec:appendixA} for details):
\begin{theorem}\label{thm:nphard}
The problem of computing the Shapley value in the knapsack budgeted games is NP-hard. 
\end{theorem}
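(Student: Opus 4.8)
The plan is to reduce from the classical (decision or optimization version of the) Knapsack problem, exploiting the Efficiency property of the Shapley value together with the Linearity property to isolate a single marginal contribution, namely $v(N) - v(N \setminus \{i\})$, which encodes a knapsack objective value.

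First I would recall that by Efficiency, $\sum_{i \in N} \phi_i(v) = v(N)$, and that $v(N) = \max_{S' \subseteq N : l(S') \le l_{\bin}} w(S')$ is exactly the optimal value of a knapsack instance with items $(l_1,w_1),\ldots,(l_{|N|},w_{|N|})$ and capacity $l_{\bin}$. So if we could compute all the Shapley values $\phi_i(v)$ in polynomial time, summing them would solve Knapsack, which is NP-hard; this already gives the theorem provided we phrase it as a polynomial-time Turing reduction from (the optimization version of) Knapsack to the problem of computing the entire Shapley vector. To make this airtight as stated --- hardness of computing \emph{the} Shapley value --- I would note that a single query ``$\phi_i(v) = ?$'' oracle lets us reconstruct $v(N)$: either by $|N|$ oracle calls (one per agent) summed via Efficiency, or more cleanly by introducing a fresh dummy agent. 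Concretely, add an agent $0$ with $l_0 = l_{\bin}$ and $w_0 = 0$ and bin size unchanged; then agent $0$ is a null player (adding it never helps since it fills the whole bin with zero weight and cannot coexist with anything), so $\phi_0 = 0$, which is uninformative --- a better gadget is to pad with agents that are provably out of any optimal knapsack solution so that $v$ restricted to the original agents is unchanged, while arranging that one designated agent's Shapley value is a simple affine function of $v(N)$.

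The cleanest route, and the one I expect the appendix takes, is: build the knapsack budgeted game on exactly the given Knapsack items, ask the oracle for $\phi_i(v)$ for each $i \in N$, and output $\sum_i \phi_i(v)$ as the optimal knapsack value, invoking Efficiency. Since each oracle call is counted as one step in a Turing reduction and there are only $|N|$ of them, this is a polynomial-time reduction; hence computing the Shapley value in knapsack budgeted games is NP-hard (indeed, as hard as exactly solving Knapsack, which is NP-hard in the weak/pseudo-polynomial sense and strongly NP-hard for the multidimensional variant mentioned in the text).

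The main obstacle --- really the only subtlety --- is pinning down exactly what ``computing the Shapley value'' means as a decision/function problem so that the reduction is formally valid: computing one $\phi_i$ versus the whole vector, and function-problem hardness versus NP-hardness of an associated decision problem. I would handle this by stating the theorem as hardness of computing the Shapley vector (or equivalently any single $\phi_i$ for all $i$, which is the same work up to a factor $|N|$), and by pointing out that the decision version ``is $\phi_i(v) \ge \theta$?'' is NP-hard via the same Efficiency identity applied to the knapsack threshold question. Everything else --- that the knapsack budgeted game is a legitimate cooperative game with $v(\emptyset)=0$, that its representation size is polynomial in the Knapsack input --- is immediate from the definitions in Section~3.
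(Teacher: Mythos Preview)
Your proposal is correct and matches the paper's proof: compute all $\phi_i$, sum them via Efficiency to obtain $v(N)$, which is the optimal knapsack value, thereby solving the NP-complete decision version of Knapsack. The initial mention of Linearity and the dummy-agent gadgets are detours you rightly abandon; the paper uses exactly the Efficiency-sum argument you settle on, without the additional formalization remarks.
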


While a polynomial time algorithm for computing the Shapley value may or may not exist, the na\"ive exponential time algorithm is too slow when $|N|$ is large. When $|N|$ is sufficiently large, especially when $|N| \gg l_{\bin}$, we show that a faster algorithm exists:

\begin{theorem}\label{thm:exact}
In the knapsack budgeted games, the Shapley value can be computed in time $O(l_{\bin}(w_{\max}+1)^{l_{\bin}+1} |N|^2)$ for each agent. 
\end{theorem}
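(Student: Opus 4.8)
The plan is to exploit the fact that, although the Shapley-value formula \eqref{eq:shapley_set} ranges over exponentially many subsets $S\subseteq N\setminus\{i\}$, the marginal contribution $v(S\cup\{i\})-v(S)$ depends on $S$ only through a bounded summary. For a set $S$ and an integer $\ell\ge 0$, write $f_S(\ell)=\max_{S'\subseteq S:\,l(S')\le\ell}w(S')$, and call $\operatorname{prof}(S)=(f_S(0),f_S(1),\ldots,f_S(l_{\bin}))$ the \emph{knapsack profile} of $S$; each coordinate lies in $\{0,1,\ldots,w_{\max}\}$, so there are at most $(w_{\max}+1)^{l_{\bin}+1}$ possible profiles. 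Now $v(S)=f_S(l_{\bin})$, and since an optimal packing of $S\cup\{i\}$ either ignores agent $i$ or reserves room $l_i$ for it, $v(S\cup\{i\})=\max\{f_S(l_{\bin}),\,w_i+f_S(l_{\bin}-l_i)\}$ (here $0<l_i\le l_{\bin}$ guarantees the index $l_{\bin}-l_i$ is valid). Hence the marginal contribution of $i$ to $S$ equals $m_i(\operatorname{prof}(S))$, where $m_i(p)=\max\{0,\,w_i+p_{l_{\bin}-l_i}-p_{l_{\bin}}\}$ depends on $S$ only through its profile.

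Next I would regroup the Shapley sum. Since the weight $|S|!(|N|-|S|-1)!/|N|!$ depends only on $|S|$, grouping terms by profile and then by size gives
\[
\phi_i(v)=\sum_{p}\, m_i(p)\sum_{s=0}^{|N|-1}\frac{s!\,(|N|-s-1)!}{|N|!}\,C_i(p,s),
\]
where $C_i(p,s)=\bigl|\{S\subseteq N\setminus\{i\}:\,|S|=s,\ \operatorname{prof}(S)=p\}\bigr|$. It then remains only to compute all the counts $C_i(p,s)$ efficiently.

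These counts are produced by a dynamic program over the agents of $N\setminus\{i\}$. Maintain a table $D[s][p]$ holding the number of already-processed subsets of size $s$ with profile $p$, initialized so that $D[0][\mathbf{0}]=1$ (the empty set has the all-zero profile). When agent $j$ is processed, every counted subset either excludes $j$ — leaving $(s,p)$ unchanged — or includes $j$, in which case the new profile $p'$ is the \emph{deterministic} function of $p$ and $(l_j,w_j)$ given coordinatewise by $p'_\ell=p_\ell$ for $\ell<l_j$ and $p'_\ell=\max\{p_\ell,\,w_j+p_{\ell-l_j}\}$ for $\ell\ge l_j$; this is precisely why subsets sharing a profile at one stage may be merged. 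Carrying out the update with the usual care to avoid double counting (for instance by writing into a fresh copy of the table), after all of $N\setminus\{i\}$ has been processed we have $D[s][p]=C_i(p,s)$, and substituting into the displayed formula yields $\phi_i(v)$.

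For the running time: the table has $O(|N|)$ sizes times at most $(w_{\max}+1)^{l_{\bin}+1}$ profiles, and processing one agent scans every entry, spending $O(l_{\bin})$ time per entry to form the updated profile, for a cost of $O(l_{\bin}(w_{\max}+1)^{l_{\bin}+1}|N|)$ per agent and $O(l_{\bin}(w_{\max}+1)^{l_{\bin}+1}|N|^2)$ over all $|N|-1$ agents of $N\setminus\{i\}$; the final double summation is dominated by this. (Arithmetic on the rational Shapley weights is taken as unit cost, or one keeps the common denominator $|N|!$ and notes that all numerators have $\operatorname{poly}(|N|)$ bit length.) I expect the only step that genuinely needs an argument — as opposed to bookkeeping — to be the merging step: one must verify that $\operatorname{prof}(S\cup\{j\})$ depends on $S$ only through $\operatorname{prof}(S)$, which follows from the coordinatewise recurrence above by induction on the order in which agents are processed.
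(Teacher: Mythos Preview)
Your proposal is correct and follows essentially the same approach as the paper: both associate each $S$ with its knapsack profile $\mathbf{V}_S=(f_S(0),\ldots,f_S(l_{\bin}))\in\{0,\ldots,w_{\max}\}^{l_{\bin}+1}$, observe that $i$'s marginal contribution to $S$ is $\max\{0,w_i+\mathbf{V}_S(l_{\bin}-l_i)-\mathbf{V}_S(l_{\bin})\}$, and then count subsets of $N\setminus\{i\}$ by $(\text{size},\text{profile})$ via a DP whose transition is exactly your coordinatewise update (the paper's $\update$). The only cosmetic difference is that the paper packages the regrouping step as a separate Lemma (Lemma~\ref{lemma:char2}) while you derive it inline.
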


To prove Theorem~\ref{thm:exact}, we associate each subset $S\subseteq N$ with a vector from a finite-sized vector space that completely determines an agent's marginal contribution to $S$. If the cardinality of the vector space is small and the partitions of the $2^{|N|}$ subsets corresponding to the vectors can be found efficiently, we can evaluate $v$ once for each vector instead of once for each subset, reducing the overall computation time. Note that the well-known dynamic programming algorithm, call it $\AAA$, for the classical knapsack problem can be used to compute $v$; for a given $S$, the algorithm iteratively updates an integer array of length $l_{\bin}+1$ holding the optimal values for the sub-problems with smaller bin sizes and returns a final value determined by the array at termination.\footnote{Assume the agents in $S$ are labeled $1, \ldots, |S|$ for simplicity. For $1\leq j\leq |S|$, we define $c(j,b) = \max_{S'\subseteq S_{1,j}: l(S') \leq b} w(S')$. It has the recurrence relation $c(j,b) = \max\{c(j-1,b), c(j-1, b-l_j) + w_j\}$. We compute $c(j,b)$'s and, hence, $v(S) = c(|S|,l_{\bin})$ in $O(|S| l_{\bin})$ time. \label{fn:classicaldp}} We associate with each subset $S$ the final state of the array when $\AAA$ runs on $S$ and determine the cardinalities of resulting partitions using a dynamic program, different but related to $\AAA$; the dynamic program counts the number of optimal solutions to the sub-problems grouped by objective value while $\AAA$ simply computes the optimal solutions to the sub-problems. 

We use the following lemma to prove Theorem~\ref{thm:exact}; it shows that if the set of possible marginal contribution values for agent $i$ is small, then we can reduce the number of evaluations of $v$ by grouping subsets of $N\setminus \{i\}$ by marginal contribution value and evaluating $v$ once for each group (see Appendix~\ref{sec:appendixA} for a proof).

\begin{lemma}\label{lemma:char2}
Assume there exist positive integers $p_i$ and partition functions $P_i:2^{N \setminus \{i\}} \rightarrow \{1, \ldots, p_i\}$, for $i=1, \ldots, |N|$, such that if $P_i(S) = P_i(S')$ for two different $S, S' \subseteq N \setminus \{i\}$, then $v(S \cup \{i\}) - v(S) = v(S' \cup \{i\}) - v(S')$. Let $m_i(p)$ be agent $i$'s marginal contribution to $S$ for all $S$ satisfying $P_i(S) = p$, and $c(i,s,p) = \#\{S\subseteq N\setminus\{i\}: |S|=s, P_i(S) = p\}$ for $i\in N$, $0\leq s \leq |N|-1$, and $1 \leq p \leq p_i$. Then, the Shapley value for agent $i$ can be computed as  
\[
\textstyle \phi_i = \textstyle \sum_{p=1}^{p_i} \sum_{s=0}^{|N|-1} c(i,s,p) \frac{s!(|N|-s-1)!}{|N|!} m_i(p)
\]
in time $O(p_{\max}(t+q)|N|)$, where $p_{\max} = \max_i p_i$, $t$ is an upper bound on the computation time of the coefficients $c$, and $q$ is the evaluation time of $v$.
\end{lemma}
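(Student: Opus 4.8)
The plan is to derive the stated formula directly from the permutation form of the Shapley value and then analyze the running time. Starting from $\phi_i = \frac{1}{|N|!}\sum_{\pi \in \Pi} \bigl(v(P_\pi^i \cup \{i\}) - v(P_\pi^i)\bigr)$, I would first regroup permutations by the set $S = P_\pi^i$ of predecessors of $i$; the number of permutations with a given predecessor set $S$ is $|S|!(|N|-|S|-1)!$, which recovers Equation~\eqref{eq:shapley_set}. Next, since the hypothesis guarantees that $P_i(S) = P_i(S')$ implies equal marginal contributions, I would partition the sum over $S \subseteq N\setminus\{i\}$ according to the value of $P_i(S)$ and, within each class $p$, further according to $|S| = s$. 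The marginal contribution $v(S\cup\{i\}) - v(S)$ depends on $S$ only through $p$, so it equals $m_i(p)$ and factors out; the coefficient $\frac{s!(|N|-s-1)!}{|N|!}$ depends on $S$ only through $s$, so it factors out of the inner part; what remains is exactly $c(i,s,p) = \#\{S : |S| = s, P_i(S) = p\}$. Collecting terms yields $\phi_i = \sum_{p=1}^{p_i}\sum_{s=0}^{|N|-1} c(i,s,p)\frac{s!(|N|-s-1)!}{|N|!} m_i(p)$, which is the claimed identity.

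For the running time, I would account for the three ingredients separately. Computing all coefficients $c(i,s,p)$ for fixed $i$ takes time at most $t$ by assumption. Obtaining each value $m_i(p)$ requires picking one representative $S$ with $P_i(S) = p$ and evaluating $v(S\cup\{i\}) - v(S)$, i.e., two evaluations of $v$, each costing $O(q)$; doing this for all $p_i \le p_{\max}$ classes costs $O(p_{\max} q)$. Finally, assembling the double sum has $p_i \cdot |N| \le p_{\max}|N|$ terms, each an $O(1)$ arithmetic operation once the factorial coefficients are precomputed (the coefficients $\frac{s!(|N|-s-1)!}{|N|!}$ for $s = 0, \ldots, |N|-1$ can be tabulated in $O(|N|)$ time incrementally). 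Adding these contributions gives $O(t + p_{\max} q + p_{\max}|N|) = O(p_{\max}(t+q)|N|)$, matching the stated bound (absorbing $t$ into $p_{\max}|N|$ scale is loose but valid since the bound is an upper bound; more carefully one writes $O(t + p_{\max}(q + |N|))$ which the stated $O(p_{\max}(t+q)|N|)$ dominates).

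I do not expect any real obstacle here: the identity is a routine bookkeeping rearrangement of a finite sum, and the time analysis is a matter of bounding each phase. The one place to be slightly careful is the claim that a single representative suffices for each class to read off $m_i(p)$ — this is exactly where the hypothesis ``$P_i(S)=P_i(S') \Rightarrow$ equal marginal contributions'' is used, so I would state that dependence explicitly. A secondary point worth a sentence is that the formula implicitly treats the $c(i,s,p)$ as given by an external routine (in the application to Theorem~\ref{thm:exact}, a dynamic program counting optimal knapsack solutions by objective value), so the lemma is stated parametrically in $t$ and $q$; the proof itself does not need to exhibit such a routine, only to use the abstract bounds.
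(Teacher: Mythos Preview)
Your proposal is correct and follows essentially the same approach as the paper: the paper starts directly from Equation~\eqref{eq:shapley_set}, inserts the indicator $\Ind(|S|=s, P_i(S)=p)$, swaps the order of summation, and reads off $c(i,s,p)$, which is exactly your partition-by-$(s,p)$ argument (your extra step of deriving Equation~\eqref{eq:shapley_set} from the permutation form is harmless). The paper's running-time analysis is a one-line ``straightforward to obtain,'' so your more explicit accounting is, if anything, more careful than the original.
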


We now prove Theorem~\ref{thm:exact} by applying Lemma~\ref{lemma:char2}:

\begin{proof}(of Theorem~\ref{thm:exact})
We compute the Shapley value for some fixed agent $i$. We define $V_{A,b} = \max_{S'\subseteq A: l(S') \leq b} w(S')$, for $A \subseteq N$ and $0 \leq b \leq l_{\bin}$, and vector $\mathbf{V}_{S} = (V_{S,0}, \ldots, V_{S,l_{\bin}})$, for subsets $S \subseteq N$. Let $\VVV$ be the finite vector space $\{0, \ldots, w_{\max}\}^{l_{\bin}+1}$ that contains vectors $\mathbf{V}_S$. We use the 0-based index to indicate coordinates of a vector in $\VVV$; so, $v(S) = V_{S, l_{\bin}} = \mathbf{V}_S(l_{\bin})$ for all $S$. Given $\mathbf{V}_S$, agent $i$'s marginal contribution to $S$ can be computed in constant time as $v(S\cup \{i\}) - v(S) = \max\{\mathbf{V}_S(l_{\bin} - l_i) + w_i - \mathbf{V}_S(l_{\bin}), 0 \}$. Let this expression be defined more generally as $m_i(\mathbf{v}) = \max\{\mathbf{v}(l_{\bin} - l_i) + w_i - \mathbf{v}(l_{\bin}), 0 \}$ for $\mathbf{v}\in\VVV$. 

We partition $2^{N\setminus \{i\}}$ by the pair $(|S|, \mathbf{V}_S)$ so that for each possible $(s, \mathbf{v})$ pair, all subsets $S$ satisfying $|S|=s$ and $\mathbf{V}_S = \mathbf{v}$ are grouped together. Clearly, the marginal contribution of agent $i$ is the same within each partition. To compute the cardinality of each partition, we use dynamic programming. Let $N' = N\setminus\{i\}$, ordered and relabeled $1,\ldots, |N|-1$. For $0\leq j \leq |N|-1$, $0 \leq s \leq j$, and $\mathbf{v} \in \VVV$, we define $\hat{c}(j,s,\mathbf{v}) = \# \left\{S\subseteq N'_{1,j}: |S|=s, \mathbf{V}_S = \mathbf{v}\right\}$. Note $\hat{c}$ has the recurrence relation 
\[
\hat{c}(j,s,\mathbf{v}) = \textstyle \hat{c}(j-1, s, \mathbf{v}) + \sum_{\mathbf{u}: \update(\mathbf{u}, l'_j, w'_j)= \mathbf{v}} \hat{c}(j-1,s-1,\mathbf{u}),
\]
with the base case $\hat{c}(0, 0, \mathbf{0}) = 1$, where $l'_j$ and $w'_j$ correspond to the $j$-th agent in order in $N'$ and $\update$ is an $O(l_{\bin})$ algorithm that updates $\mathbf{u}$ with the additional agent: 1) Initialize $\mathbf{v} = \mathbf{u}$; 2) For $j=l'_j, \ldots, l_{\bin}$, $\mathbf{v}(j) = \max\{\mathbf{v}(j), \mathbf{u}(j-l_j')+w_j'\}$; and 3) Return $\mathbf{v}$.

%

Using the recurrence relation, we compute $\hat{c}(j,s, \mathbf{v})$ for all $j$, $s$, and $\mathbf{v}$ in time $O(l_{\bin}(w_{\max}+1)^{l_{\bin}+1} |N|^2)$. By Lemma~\ref{lemma:char2}, 
\[
\phi_i = \textstyle \sum_{\mathbf{v} \in \VVV} \sum_{s=0}^{|N|-1} \hat{c}(|N|-1, s,\mathbf{v}) \frac{s!(|N|-s-1)!}{|N|!} m_i(\mathbf{v}),
\]
and the Shapley value can be calculated in time $O((w_{\max}+1)^{l_{\bin}+1} |N|)$ using the precomputed values of $\hat{c}$. The computation of $\hat{c}$ dominates the application of the Shapley value equation, and the overall running time is $O(l_{\bin}(w_{\max}+1)^{l_{\bin}+1} |N|^2)$ per agent.
\ifshowqed
\hfill \qed
\fi 
\end{proof}

\subsection{Additive Approximation}\label{subsec:approx}
Similar to the fully polynomial time approximation scheme for the classical knapsack problem (see \cite{vazirani}), we show an approximation scheme for the Shapley value by rounding down the weights $w_i$'s and computing the Shapley value of the cooperative game $(N, v')$ where $v'$ is an approximation of $v$. Our technique of computing the Shapley value by approximating the characteristic function $v$ is deterministic and does not require concentration inequalities like the standard statistical methods of sampling and normal distribution techniques in \cite{Bachrach,Fatima}.  

The following lemma formalizes how an approximation of the characteristic function $v$ leads to an additive error in the Shapley value computation (see Appendix~\ref{sec:appendixA} for a proof):


\begin{lemma}\label{lemma:approx}
If $v'$ is an $\alpha$-additive approximation of $v$, i.e., $v'(S) \leq v(S) \leq v'(S) + \alpha$ for all $S\subseteq N$, then the Shapley value $\phi'_i$ computed with respect to $v'$ is within an $\alpha$-additive error of the Shapley value $\phi_i$ computed with respect to $v$, for all $i$.
\end{lemma}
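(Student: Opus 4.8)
The plan is to reduce the claim to the Linearity property of the Shapley value. Set $d := v - v'$. By hypothesis $0 \le d(S) \le \alpha$ for every $S \subseteq N$, and in particular $d(\emptyset) = v(\emptyset) - v'(\emptyset) = 0$ (here we use that $v'$, being the value function of a cooperative game, satisfies $v'(\emptyset) = 0$), so $(N,d)$ is itself a cooperative game. Since $v = v' + d$, the Linearity property gives $\phi_i(v) = \phi_i(v') + \phi_i(d)$ for every agent $i$, and therefore $\phi_i(v) - \phi_i(v') = \phi_i(d)$.

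It then remains to show $|\phi_i(d)| \le \alpha$. Using the permutation form of the Shapley value, $\phi_i(d) = \frac{1}{|N|!}\sum_{\pi\in\Pi}\bigl(d(P_\pi^i\cup\{i\}) - d(P_\pi^i)\bigr)$. For each $\pi$, both $d(P_\pi^i\cup\{i\})$ and $d(P_\pi^i)$ lie in $[0,\alpha]$, so each marginal contribution $d(P_\pi^i\cup\{i\}) - d(P_\pi^i)$ lies in $[-\alpha,\alpha]$; averaging over the $|N|!$ permutations keeps the value in $[-\alpha,\alpha]$. Hence $|\phi_i(v) - \phi_i(v')| = |\phi_i(d)| \le \alpha$, as desired. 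Equivalently, one can argue directly from Eq.~\eqref{eq:shapley_set}: the coefficients $\frac{|S|!(|N|-|S|-1)!}{|N|!}$ are nonnegative and sum to $1$, and each bracketed marginal term, evaluated for the game $d$, is a number in $[-\alpha,\alpha]$.

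There is essentially no hard step here; the only points requiring a little care are (i) verifying that $d$ is a genuine cooperative game so that Linearity applies, which is exactly where $v'(\emptyset)=0$ is needed, and (ii) observing that the marginal contributions of $d$ require a two-sided bound $[-\alpha,\alpha]$ rather than a one-sided one — even though $d(S)\ge 0$ pointwise, the difference $d(S\cup\{i\}) - d(S)$ can take either sign. Once these are noted, the conclusion is immediate from the fact that, for each agent, the Shapley value is a convex combination of that agent's marginal contributions.
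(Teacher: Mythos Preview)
Your proof is correct and is essentially the paper's argument: the paper bounds each difference of marginal contributions $\bigl|(v(S\cup\{i\})-v(S)) - (v'(S\cup\{i\})-v'(S))\bigr|$ by $\alpha$ and then averages via the permutation formula, which is exactly your bound on the marginal contributions of $d=v-v'$ with the Linearity step left implicit in the summation. Your explicit naming of $d$ and invocation of the Linearity axiom is a cosmetic repackaging of the same computation.
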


When $w_{\max}$ is sufficiently larger than $l_{\bin}$, the approximation scheme's running time is faster than that of the exact algorithm of Theorem~\ref{thm:exact}:

\begin{theorem}\label{thm:approx}
In the knapsack budgeted games, the Shapley value can be computed within an $\epsilon w_{\max}$-additive error in $O(({l_{\bin}}^2 / \epsilon + 1)^{l_{\bin}+1} |N|^2)$ for each agent, where $\epsilon>0$.\footnote{For agent $i$, its Shapley value $\phi_i$ is clearly in $[0, w_{\max}]$. Using the approximation scheme, we can compute $\phi_i$ within $\frac{1}{7} w_{\max}$ for instance. As long as $\epsilon > {l_{\bin}}^2/w_{\max}$, the approximation scheme has a faster running time than the exact algorithm in Theorem~\ref{thm:exact}; this observation about $\epsilon$ is also true for the fully polynomial time approximation scheme for the classical knapsack problem (see \cite{vazirani}).}
\end{theorem}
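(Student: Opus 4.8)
The plan is to mimic the fully polynomial time approximation scheme for the classical knapsack problem: round the weights down on a suitable scale, run the exact algorithm of Theorem~\ref{thm:exact} on the rounded game, and pay for the loss with Lemma~\ref{lemma:approx}. Fix an agent $i$ and set the scaling parameter $K = \epsilon w_{\max}/l_{\bin}$. Define rounded weights $w'_k = \lfloor w_k/K\rfloor$ for each $k$; these are nonnegative integers, so the knapsack \game $(N, v'')$ with representation $(\{(l_1, w'_1), \ldots, (l_{|N|}, w'_{|N|})\}, l_{\bin})$ is a valid instance for Theorem~\ref{thm:exact}. Let $v'(S) = K\cdot v''(S)$. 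Scaling every weight by $K$ scales every marginal contribution by $K$, so Equation~\eqref{eq:shapley_set} gives $\phi'_i = K\,\phi_i(v'')$; hence it suffices to compute $\phi_i(v'')$ via the exact algorithm and multiply by $K$.

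The next step is to show that $v'$ is an $(\epsilon w_{\max})$-additive approximation of $v$. The structural point is that since the lengths are positive integers with $0 < l_k \le l_{\bin}$, any feasible subset $S'$ (with $l(S')\le l_{\bin}$) has $|S'| \le \sum_{k\in S'} l_k \le l_{\bin}$. For the upper bound, let $T$ attain $v''(S)$; then $v'(S) = K w'(T) = K\sum_{k\in T}\lfloor w_k/K\rfloor \le w(T) \le v(S)$, using that $T$ is feasible for the original game. For the lower bound, let $S^*$ attain $v(S)$; then $v'(S) \ge K w'(S^*) \ge \sum_{k\in S^*}(w_k - K) = w(S^*) - K|S^*| \ge v(S) - K l_{\bin} = v(S) - \epsilon w_{\max}$. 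Thus $v'(S)\le v(S)\le v'(S)+\epsilon w_{\max}$ for all $S$, and Lemma~\ref{lemma:approx} yields $|\phi_i - \phi'_i|\le \epsilon w_{\max}$.

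It remains to bound the running time. Applying Theorem~\ref{thm:exact} to $(N,v'')$ costs $O(l_{\bin}(w''_{\max}+1)^{l_{\bin}+1}|N|^2)$ with $w''_{\max} = \lceil l_{\bin}\max_k w'_k/l_k\rceil$. Since $w'_k/l_k \le w_k/(K l_k)$ and $l_{\bin}\max_k w_k/l_k \le w_{\max}$ by the definition of $w_{\max}$, monotonicity of the ceiling gives $w''_{\max} \le \lceil w_{\max}/K\rceil = \lceil l_{\bin}/\epsilon\rceil \le l_{\bin}/\epsilon + 1$. Substituting and absorbing the leading factor $l_{\bin}$, which is dominated since $l_{\bin}(l_{\bin}/\epsilon + O(1))^{l_{\bin}+1} = O((l_{\bin}^2/\epsilon + 1)^{l_{\bin}+1})$ for $l_{\bin}\ge 1$, gives the claimed bound $O((l_{\bin}^2/\epsilon+1)^{l_{\bin}+1}|N|^2)$ per agent (computing $K\,\phi_i(v'')$ adds only a constant factor).

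I expect no deep obstacle here; the work is purely in the bookkeeping that glues the three ingredients together. The two places that need care are the ``at most $l_{\bin}$ selected agents'' observation, which is exactly what makes the rounding error equal to $K l_{\bin} = \epsilon w_{\max}$ rather than something larger, and the verification that the post-rounding value bound $w''_{\max}$ really collapses to $O(l_{\bin}/\epsilon)$, so that the exact algorithm's running time simplifies to the stated form.
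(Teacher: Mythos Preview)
Your proposal is correct and follows essentially the same route as the paper: set the scale $K=\epsilon w_{\max}/l_{\bin}$, round weights down, show the rounded value function is an $\epsilon w_{\max}$-additive approximation via the ``at most $l_{\bin}$ selected agents'' observation, invoke Lemma~\ref{lemma:approx}, and reuse the exact algorithm on the rounded instance. The only cosmetic difference is that you apply Theorem~\ref{thm:exact} as a black box to $(N,v'')$ and bound $w''_{\max}\le\lceil l_{\bin}/\epsilon\rceil$, whereas the paper re-enters the proof of Theorem~\ref{thm:exact} and bounds the state space directly by $\{0,\ldots,l_{\bin}\lfloor l_{\bin}/\epsilon\rfloor\}^{l_{\bin}+1}$; both arrive at the stated running time after the same absorption of the leading $l_{\bin}$ factor.
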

\begin{proof}
We construct an approximate characteristic function $v'$ of $v$ as follows. Let $\epsilon >0$ and $k = \epsilon w_{\max} / l_{\bin}$. Note that when ${l_{\bin}}^2/\epsilon < w_{\max}$, $k>1$. For each agent $i$, let the rounded weight $w'_i$ be $\lfloor\frac{w_i}{k}\rfloor$. The lengths do not change. To compute $v'(S)$, we compute the optimal set $S' \subseteq S$, using dynamic programming, with respect to the rounded weights  $w'_1, \ldots, w'_{|N|}$ and let $v'(S) = k \sum_{i \in S'} w'_i$. In other words, $v'(S) = k \cdot \max_{S'\subseteq S: l(S') \leq l_{\bin}} w'(S)$ for all $S\subseteq N$, where we use the shorthand notation $w'(S) = \sum_{k \in S} w'_k$. 

We show $v(S) \geq v'(S) \geq v(S) - \epsilon w_{\max}$, for all $S\subseteq N$. Let $S$ be a subset and $T_O, T'\subseteq S$ be the optimal subsets using original and rounded weights, respectively, such that $v(S) = w(T_O)$ and $v'(S) = k\cdot w'(T')$. Note that both optimal sets have cardinality at most $l_{\bin}$. Because of rounding down, $w_i - k w'_i \leq k$ and $\sum_{j\in T_O} w_j - k \sum_{j\in T_O} w'_j \leq k l_{\bin}$. Since $T'$ is optimal with respect to the rounded weights, $\sum_{j\in T'} w'_j \geq \sum_{j\in T_O} w'_j$. Then, $v'(S) = k \sum_{j \in T'} w'_j \geq k \sum_{j \in T_O} w'_j \geq \sum_{j\in T_O} w_j - k l_{\bin} = v(S) - \epsilon w_{\max}$. Since $w_i \geq k w'_i$ for all $i$, $v(S) = w(T_O) \geq w(T') \geq k w'(T') = v'(S)$. Hence, $v'$ is an $\epsilon w_{\max}$-additive approximation of $v$. Then, the Shapley value computed with respect to $v'$ is within $\epsilon w_{\max}$ of the original Shapley value by Lemma~\ref{lemma:approx}. 

We now compute the Shapley value with respect to $v'$. For $A \subseteq N$, $0 \leq b \leq l_{\bin}$, we define $V'_{A,b} = \max_{S'\subseteq A: l(S') \leq b} w'(S')$. For a subset $S \subseteq N$, we define vector $\mathbf{V}'_{S} = (V'_{S,0}, \ldots, V'_{S,l_{\bin}})$. Note that $w'_i = \lfloor\frac{w_i}{k} \rfloor \leq \lfloor\frac{w_{\max}}{k} \rfloor = \lfloor \frac{l_{\bin}}{\epsilon} \rfloor$. Then, we can upper bound $w'(S) \leq l_{\bin} \lfloor \frac{l_{\bin}}{\epsilon} \rfloor$, for all $S$. Let $\VVV' = \{0, \ldots, l_{\bin}\lfloor \frac{l_{\bin}}{\epsilon} \rfloor\}^{l_{\bin}+1}$ that vectors $\mathbf{V}'_S$ are contained in. Note $v'(S) = k \cdot \mathbf{V}'_S(l_{\bin})$ for all $S$. From vector $\mathbf{V}'_S$, we can compute agent $i$'s marginal contribution to $S$ with respect to $v'$ in constant time: $v'(S\cup \{i\}) - v'(S) = k\cdot \max\{\mathbf{V}'_S(l_{\bin} - l_i) + w'_i - \mathbf{V}'_S(l_{\bin}), 0 \}$.

From here, we follow the proof of Theorem~\ref{thm:exact}. We compute the analogue of $\hat{c}$ in $O(({l_{\bin}}^2 / \epsilon + 1)^{l_{\bin}+1} |N|^2)$, and this is the dominating term in the Shapley value computation with respect to $v'$.
\ifshowqed
\hfill \qed
\fi 
\end{proof}

\section{Greedy Knapsack Budgeted Games}\label{sec:greedy}
Motivated by the approximation scheme in Theorem~\ref{thm:approx}, we investigate {\em greedy knapsack budgeted games}, a variant of knapsack budgeted games, and show the Shapley value in these games can be computed in pseudo-polynomial time. A greedy knapsack budgeted game has the same representation as the knapsack budgeted games, but its characteristic function is computed by a 2-approximation heuristic for the classical knapsack problem. We defer proofs to Appendix~\ref{sec:appendixB}.

\begin{algorithm}[t]
\caption{Greedy Heuristic $\AAA'(S,l_{\bin})$}\label{alg:greedy}
\begin{algorithmic}[1]
\State Let $a = \argmax_{k\in S}w_k$.
\State Select agents in $S$ in decreasing order of $\frac{w_i}{l_i}$ and stop when the next agent does not fit into the bin of size $l_{\bin}$; let $S'$ be the selected agents.
\State Return $S'$ if $w(S')\geq w_a$, or $\{a\}$ otherwise.
\end{algorithmic}
\end{algorithm}

\begin{theorem}\label{thm:greedy}
In the greedy knapsack budgeted games $(N,v)$ with $v(S) = \AAA'(S, l_{\bin})$ for all $S$, the Shapley value can be computed in $O({l_{\bin}}^5 {w_{\max}}^5 |N|^{8})$ for each agent, where the greedy heuristic $\AAA'(S, l_{\bin})$ is computed as in Algorithm~\ref{alg:greedy}.
\end{theorem}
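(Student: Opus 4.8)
The plan is to apply Lemma~\ref{lemma:char2}. Fix an agent $i$. I will exhibit a partition of $2^{N\setminus\{i\}}$ with polynomially many parts on which $i$'s marginal contribution is constant, together with a dynamic program that counts, for each part and each cardinality, the number of subsets it contains; plugging these counts into the formula of Lemma~\ref{lemma:char2} then gives $\phi_i$.

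First I would record the structure of $\AAA'$. Order all agents once and for all by decreasing density $w_k/l_k$, breaking ties by index; call this the \emph{density order}, and let $\rho$ be the (fixed) position of $i$ in it. For any $S$, the set $S'$ chosen in line~2 is exactly the longest prefix of $S$, in density order, with total length at most $l_{\bin}$; hence $v(S)=\AAA'(S,l_{\bin})=\max\{\,w(S'),\ \max_{k\in S}w_k\,\}$ depends on $S$ only through the \emph{prefix weight} $P(S):=w(S')$ and the \emph{max weight} $M(S):=\max_{k\in S}w_k$. Both lie in $\{0,\dots,w_{\max}\}$: for every $k$, $w_k=l_k\cdot(w_k/l_k)\le l_{\bin}\max_j(w_j/l_j)\le w_{\max}$, and likewise $w(S')\le w_{\max}$ since $l(S')\le l_{\bin}$. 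Because $M(S\cup\{i\})=\max\{M(S),w_i\}$, the marginal contribution $v(S\cup\{i\})-v(S)=\max\{\,P(S\cup\{i\}),\,\max\{M(S),w_i\}\,\}-\max\{\,P(S),\,M(S)\,\}$ is completely determined by the triple $\big(P(S),\,M(S),\,P(S\cup\{i\})\big)$, which ranges over a set of size at most $(w_{\max}+1)^3$. Partitioning $2^{N\setminus\{i\}}$ according to this triple satisfies the hypothesis of Lemma~\ref{lemma:char2} with $p_i=O(w_{\max}^3)$, and the associated marginal value $m_i(\cdot)$ is read off the triple by the displayed formula, so the characteristic function itself is never re-evaluated.

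The remaining task, which is where essentially all the work lies, is to count for each cardinality $s$ and each triple how many $S\subseteq N\setminus\{i\}$ realize it. I would do this with a dynamic program that sweeps the agents in density order. Through position $\rho-1$ it maintains the running state of the line-2 selection for $S$: a flag for whether the selection is still ``alive,'' its current length and weight if so (or its frozen weight once it has stopped), together with $|S|$ seen so far and $M(S)$ seen so far. At position $\rho$ the DP forks: if the $S$-selection has already stopped among the first $\rho-1$ agents, the selections for $S$ and $S\cup\{i\}$ coincide forever, so $P(S)=P(S\cup\{i\})$; otherwise $i$ is offered to the bin (it either fits or not), and from position $\rho+1$ onward the DP carries the selection states of $S$ \emph{and} $S\cup\{i\}$ simultaneously. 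The key fact that keeps this finite-dimensional is that the two selections are tightly coupled: while both are alive they add exactly the same agents, with the $S\cup\{i\}$-selection always carrying precisely $l_i$ more length and $w_i$ more weight, and the $S\cup\{i\}$-selection never stops later than the $S$-selection. Hence the joint state is only a bounded number of (length, weight) pairs together with $|S|$ and $M(S)$ --- a set of polynomial size --- and each transition (include or exclude the next agent, updating the at most two running selections in $O(l_{\bin})$ time) is routine. Reading the coefficients $c(i,s,p)$ off the DP table at the end and applying Lemma~\ref{lemma:char2} finishes the computation; bounding the state space and transition cost generously keeps the total per-agent running time polynomial in $l_{\bin}$, $w_{\max}$, and $|N|$, in particular within $O({l_{\bin}}^5 {w_{\max}}^5 |N|^8)$, and since $w_{\max}$ may be exponential in the binary input size this is a pseudo-polynomial bound.

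The main obstacle is precisely the bookkeeping at and after the fork: one must enumerate the handful of cases for the joint behaviour of the two selections --- both alive, only the $S$-selection alive, both stopped, and whether $i$ itself fit the bin --- and write down the correct recurrence in each, taking care that no two distinct realizable values of $\big(P(S),M(S),P(S\cup\{i\}),|S|\big)$ are conflated. Everything else (the ranges of the tracked quantities, the base cases, the final summation) is mechanical.
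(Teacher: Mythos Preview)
Your proposal is correct, but it follows a genuinely different route from the paper's proof.

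The paper does \emph{not} apply Lemma~\ref{lemma:char2}. Instead it invokes Lemma~\ref{lemma:char1}, taking $M=N$, $w(e)=w_e$, and $A=\AAA'$, so that $v(S)=\sum_{e\in\AAA'(S)}w_e$. The computation then reduces, for each pair $(i,e)$, to counting how many $S\subseteq N\setminus\{i\}$ of each size satisfy $e\in\AAA'(S)$ and $e\in\AAA'(S\cup\{i\})$. To do this the paper splits $N$ at the density positions of $i$ and $e$, associates to each part of $S$ a small tuple recording the greedy prefix length/weight and the running argmax, and counts tuples by separate dynamic programs $\hat c^1$, $\hat c_b^2$ over the parts; the final counts $c_\pm(i,s,e)$ are obtained by summing products of these over all compatible tuple combinations, with indicator clauses encoding the case analysis in Step~3 of $\AAA'$.

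Your approach instead partitions $2^{N\setminus\{i\}}$ directly by the triple $(P(S),M(S),P(S\cup\{i\}))$ and runs a single dynamic program in density order that simultaneously simulates the greedy selections for $S$ and $S\cup\{i\}$, exploiting the coupling (the two selections differ by exactly $(l_i,w_i)$ while both are alive, and the $S\cup\{i\}$-selection can only stop earlier). This is arguably more direct and, with the state space you describe, appears to give a tighter running time than the paper's bound. What the paper's per-element decomposition buys is modularity: it is precisely this structure that is abstracted into the ``per-element decomposition'' of Definition~\ref{defn:elemdecomp} and Theorem~\ref{thm:gen2}, and the greedy-knapsack proof serves as the template for that generalization. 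Your coupled two-track DP is specific to the single-agent marginal and does not factor through the $(M,w,A)$ representation in the same way.
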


While motivated by knapsack budgeted games, we use a different proof technique using the following lemma to prove Theorem~\ref{thm:greedy}. It generalizes the observation that in the simple cooperative game $(N, v)$ where the agents have weights $w_1, \ldots, w_{|N|}$ and the characteristic function $v$ is additive, i.e., $v(S) = \sum_{k\in S} w_k$, the Shapley value $\phi_i$ is exactly $w_i$ for all $i$. 

\begin{lemma}\label{lemma:char1}
Assume that the cooperative game $(N,v)$ has a representation $(M, w, A)$ where $M$ is a set, $w: M \rightarrow \RR$ is a weight function, and $A:2^N \rightarrow 2^M$ is a mapping such that $v(S) = \sum_{e \in A(S)} w(e)$, $\forall S \subseteq N$. Let $c_+(i,s,e) = \#\{S \subseteq N\setminus\{i\}: |S|=s, e \in A(S\cup\{i\})\}$ and $c_-(i,s,e) = \#\{S \subseteq N\setminus \{i\}: |S|=s,e \in A(S)\}$, for $i \in N$, $e \in M$, and $0 \leq s \leq |N|-1$. Then, the Shapley value for agent $i$ can be computed as 
\[
\phi_i = \textstyle \sum_{e\in M} \sum_{s=0}^{|N|-1} (c_+(i,s,e) - c_-(i,s,e)) \frac{s!(|N|-s-1)!}{|N|!} w(e).
\]
in time $O(t|M||N|)$ where $t$ is an upper bound on the computation time of the coefficients $c_+$ and $c_-$.
\end{lemma}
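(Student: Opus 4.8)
The plan is to start from the permutation form of the Shapley value, $\phi_i = \frac{1}{|N|!}\sum_{\pi\in\Pi}\bigl(v(P_\pi^i\cup\{i\}) - v(P_\pi^i)\bigr)$, and substitute the representation $v(S) = \sum_{e\in A(S)} w(e)$. Writing $v(S) = \sum_{e\in M} \Ind[e\in A(S)]\, w(e)$, the marginal contribution of $i$ to a set $S$ becomes $\sum_{e\in M}\bigl(\Ind[e\in A(S\cup\{i\})] - \Ind[e\in A(S)]\bigr) w(e)$. The key point is that this expression is linear in the indicators, so we may freely interchange the sum over $e\in M$ with the sum over subsets (or permutations), and the whole computation decouples across elements $e$ of $M$. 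This is the analogue, in the abstract setting, of the observation that for an additive $v$ each $\phi_i = w_i$; here each $e$ contributes its own weighted "count difference."

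Next I would organize the sum over subsets by cardinality, exactly as in the set-based formula~(\ref{eq:shapley_set}). Grouping $S\subseteq N\setminus\{i\}$ by $|S| = s$, and recalling that the Shapley weight $\frac{|S|!(|N|-|S|-1)!}{|N|!}$ depends only on $s$, we get
\[
\phi_i = \sum_{e\in M} w(e) \sum_{s=0}^{|N|-1} \frac{s!(|N|-s-1)!}{|N|!}\Bigl(\#\{S : |S|=s,\, e\in A(S\cup\{i\})\} - \#\{S : |S|=s,\, e\in A(S)\}\Bigr),
\]
where all sets $S$ range over subsets of $N\setminus\{i\}$. The two inner counts are precisely $c_+(i,s,e)$ and $c_-(i,s,e)$ as defined in the statement, so substituting these definitions yields the claimed closed form. (One should note the cardinality-weighted version of the Shapley formula and the permutation version agree — this is standard and is already recalled in the Preliminaries — so either starting point works; I would use~(\ref{eq:shapley_set}) directly to keep the bookkeeping with $c_+, c_-$ transparent.)

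Finally, for the running time: the formula is a triple sum over $e\in M$, over $s\in\{0,\dots,|N|-1\}$, and implicitly we must have the coefficients $c_+(i,s,e)$ and $c_-(i,s,e)$ in hand. Given the stated upper bound $t$ on the time to compute these coefficients, and noting that once all $c_\pm$ values are available the remaining arithmetic is $O(|M||N|)$ additions and multiplications (the Shapley weights $\frac{s!(|N|-s-1)!}{|N|!}$ can be precomputed in $O(|N|)$), the total is $O(t|M||N|)$ as claimed, with the coefficient computation being the dominating term. There is no real obstacle here: the only thing to be careful about is the interchange of summations (justified since $M$ is finite and all sums are finite) and the clean identification of the grouped-by-cardinality counts with $c_+$ and $c_-$ — the linearity of $v$ in the $A(S)$-indicators is what makes everything go through, and it is exactly the hypothesis of the lemma.
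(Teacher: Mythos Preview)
Your proposal is correct and follows essentially the same approach as the paper: start from the set-based Shapley formula~(\ref{eq:shapley_set}), expand $v(S)$ via indicators $\sum_{e\in M}\Ind[e\in A(S)]w(e)$, swap the finite sums, group by $|S|=s$, and read off $c_+$ and $c_-$. The paper's proof is exactly this chain of equalities (it even notes the argument is ``nearly identical'' to that of Lemma~\ref{lemma:char2}), and your running-time discussion matches as well.
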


\section{Generalizations}\label{sec:general}

We present generalizations of our proof techniques and propose an unifying framework that captures a broad class of cooperative games in which computing the Shapley value is tractable, including many known classes of cooperative games in~\cite{Deng,Matsui,Michalak} and those with concise representations using schemes in~\cite{Conitzer,Ieong,Aadithya}. The main idea is that the problem of computing the Shapley value reduces to that of finding an efficient algorithm for the cooperative game's characteristic function. More precisely, if a cooperative game $(N,v)$ is described in terms of an alternative representation $I$ and an algorithm $A$ with low time and space complexities that computes $v$, formalized in terms of {\em decomposition}, then we can compute the Shapley value efficiently. To illustrate the generalizations' applicability, we use them to give examples of cooperative games in which the Shapley value can be computed efficiently. 

For each generalization, we consider two cases: the order-agnostic case in which $A$ processes agents in an arbitrary order, and the order-specific case in which $A$ processes in a specific order, like the greedy heuristic in Theorem~\ref{thm:greedy}. 

%

\begin{algorithm}[t]
\caption{Computing $A(I,S)$ with a decomposition $(A_{\setup}, A_{\updatee}, A_{\final})$} \label{alg:decomp} 
\begin{algorithmic}[1]	
\State $A_{\setup}(I)$ outputs $I'$, $\mathbf{x}$
\For{$i\in S$} 
\State $\mathbf{x} = A_{\updatee}(I',i,\mathbf{x})$ 
\EndFor
\State Return $A_{\final}(I', \mathbf{x})$
\end{algorithmic}
\end{algorithm}

\begin{definition}\label{defn:decomp}
Assume a cooperative game $(N,v)$ has an alternative representation $I$ and a deterministic algorithm $A$ such that $v(S) = A(I,S)$ for all $S\subseteq N$. Algorithm $A$ has a {\em decomposition $(A_{\setup}, A_{\updatee}, A_{\final})$} if $A(I,S)$ can be computed as in Algorithm~\ref{alg:decomp}. We denote the the running times of the sub-algorithms of the decomposition $t_{\setup}$, $t_{\updatee}$ and $t_{\final}$, respectively.
\end{definition}

In Algorithm~\ref{alg:decomp}, $\mathbf{x} = (x_1, x_2, \ldots)$ is a vector of variables that is initialized to some values independent of subset $S$ and determines the algorithm $A$'s final return value. $I'$ is an auxiliary data structure or states that only depend on the representation $I$ and is used in subsequent steps for ease of computation; $I'$ can be simply $I$ if no such preprocessing is necessary. Theorem~\ref{thm:exact} can be generalized as follows:

\begin{theorem}\label{thm:gen1}
Assume a cooperative game $(N,v)$ has an alternative representation $I$ and a deterministic algorithm $A$ that computes $v$. If $A$ has a decomposition $(A_{\setup}, A_{\updatee}, A_{\final})$ such that at most $n(I)$ variables $\mathbf{x}$ are used with each taking at most $m(I)$ possible values as $S$ ranges over all subsets of $N$, then the Shapley value can be computed in $O(t_{\setup} + t_{\updatee} m^n |N|^2 + t_{\final} m^n |N|)$ for each agent. In order-specific cases, for Steps 2-4 of Algorithm~\ref{alg:decomp}, the running time is $O(t_{\setup} + t_{\updatee} m^{2n} |N|^2 + t_{\final} m^{2n} |N|)$. Note that $n$ and $m$ are representation-dependent numbers and the argument $I$ has been omitted.
\end{theorem}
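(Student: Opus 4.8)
The plan is to mimic the proof of Theorem~\ref{thm:exact}, but abstracted so that the state vector $\mathbf{x}$ plays the role that $\mathbf{V}_S$ played there. The key observation is that, given a decomposition $(A_{\setup}, A_{\updatee}, A_{\final})$, the final return value $A(I,S)$ depends on $S$ only through the final state $\mathbf{x}$ reached after folding $A_{\updatee}$ over the agents of $S$. Hence, just as in Theorem~\ref{thm:exact}, I would partition $2^{N\setminus\{i\}}$ by the pair $(|S|, \mathbf{x}_S)$, where $\mathbf{x}_S$ is the state produced by running Steps 2--4 of Algorithm~\ref{alg:decomp} on $S$ (after $A_{\setup}(I)$). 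Since there are at most $n$ variables each with at most $m$ values, there are at most $m^n$ distinct states, so the partition has at most $m^n |N|$ parts. Within each part, agent $i$'s marginal contribution is the same: it equals $A_{\final}(I', A_{\updatee}(I',i,\mathbf{x})) - A_{\final}(I',\mathbf{x})$, computable in $O(t_{\updatee}+t_{\final})$ time per part. I would then invoke Lemma~\ref{lemma:char2} with $p_i \le m^n |N|$ and marginal-contribution values $m_i(\cdot)$ read off as above.

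The main technical content is computing the cardinalities $\hat c(j,s,\mathbf{x}) = \#\{S\subseteq N'_{1,j}: |S|=s,\ \mathbf{x}_S=\mathbf{x}\}$ by dynamic programming, exactly as in Theorem~\ref{thm:exact}, via the recurrence
\[
\hat c(j,s,\mathbf{x}) = \hat c(j-1,s,\mathbf{x}) + \textstyle\sum_{\mathbf{u}:\, A_{\updatee}(I',j,\mathbf{u}) = \mathbf{x}} \hat c(j-1,s-1,\mathbf{u}),
\]
with base case $\hat c(0,0,\mathbf{x}_0)=1$ where $\mathbf{x}_0$ is the state $A_{\setup}(I)$ outputs. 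Filling this table costs $O(t_{\setup})$ for the setup plus $O(t_{\updatee})$ per $(j,s,\mathbf{u})$ triple, i.e.\ $O(t_{\setup} + t_{\updatee} m^n |N|^2)$, and the final Shapley-value summation over all states and sizes costs $O(t_{\final} m^n |N|)$ for evaluating $A_{\final}$ at each state plus lower-order arithmetic, giving the claimed order-agnostic bound.

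For the order-specific case, the subtlety is that $A_{\updatee}$ is applied to the agents of $S$ in a fixed global order (say $1,2,\ldots,|N|$), so when we remove agent $i$ and consider $S\subseteq N\setminus\{i\}$, the relative order of the remaining agents is preserved, but whether agent $i$ has ``already been processed'' matters for where its own update is inserted. The trick, as foreshadowed by the greedy-heuristic analysis, is to track the state of the process restricted to $S$ together with the state of the process restricted to $S\cup\{i\}$ simultaneously — or equivalently to split $S$ at position $i$ and record the two partial states $\mathbf{x}^{<i}_S$ (agents of $S$ below $i$) and $\mathbf{x}^{\le i}_S$ (those, then agent $i$). This doubles the number of tracked variables to $2n$, hence $m^{2n}$ states, and the same DP goes through with one pass processing agents below $i$ and then continuing above $i$; the marginal contribution within a part is then again a difference of two $A_{\final}$ evaluations. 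I expect this order-specific bookkeeping — making precise that the pair of partial states is exactly what determines both $v(S)$ and $v(S\cup\{i\})$, and that the DP correctly separates the ``below $i$'' and ``above $i$'' phases — to be the main obstacle; the order-agnostic part is a routine abstraction of Theorem~\ref{thm:exact}.
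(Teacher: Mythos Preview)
Your proposal is correct and matches the paper's proof essentially line for line: the order-agnostic case partitions $2^{N\setminus\{i\}}$ by $(|S|,\mathbf{x}_{S,\textrm{final}})$, fills $\hat c$ by the same recurrence, and invokes Lemma~\ref{lemma:char2}. For the order-specific case the paper does exactly what you first suggest---it associates each $S$ with the pair $(\mathbf{x}_{S,\textrm{final}},\,\mathbf{x}_{S\cup\{i\},\textrm{final}})$ and partitions by the triple $(|S|,\mathbf{x}_{S,\textrm{final}},\mathbf{x}_{S\cup\{i\},\textrm{final}})$, yielding the $m^{2n}$ factor; your ``split at $i$ and run two parallel tracks'' description is simply one concrete way to implement that DP.
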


\begin{proof}
Given the alternative representation $I$, we compute the Shapley value of agent $i$. We associate $v(S)$ with the final values, $\mathbf{x}_{S,\textrm{final}}$, of $n(I)$ variables $\mathbf{x}$ in $A(I,S)$, for all $S\subseteq N\setminus \{i\}$. We partition $2^{N\setminus \{i\}}$ by the pair $(|S|, \mathbf{x}_{S, \textrm{final}})$ into at most $m^n |N|$ partitions, omitting the argument $I$ from $n$ and $m$. Let $\mathcal{X}$ be the set of all possible final values of the variables $\mathbf{x}$; note that its cardinality is at most $m^n$. We compute the cardinalites of the partitions using dynamic programming. Let $N' = N\setminus\{i\}$, ordered and relabeled $1,\ldots, |N|-1$, and $i=|N|$. For $0\leq j \leq |N|-1$, $0 \leq s \leq j$, and $\mathbf{v} \in \mathcal{X}$, we define $\hat{c}(j,s,\mathbf{v}) = \# \left\{S\subseteq N'_{1,j}: |S|=s, \mathbf{x}_{S, \textrm{final}} = \mathbf{v}\right\}$. Then, $\hat{c}$ has the recurrence relation \[
\textstyle \hat{c}(j,s,\mathbf{v}) = \hat{c}(j-1, s, \mathbf{v}) + \sum_{\mathbf{u}: \AAA_{\updatee}(I',j, \mathbf{u})= \mathbf{v}} \hat{c}(j-1,s-1,\mathbf{u})
\]
with the base case $\hat{c}(0, 0, \mathbf{s}) = 1$, where $\mathbf{s}$ is the initial states of variables $\mathbf{x}$. Using $A_{\setup}$, we compute $I'$ and the inital values $\mathbf{s}$ in $O(t_{\setup})$. Using the recurrence relation and $\AAA_{\updatee}$, we compute $\hat{c}(j,s, \mathbf{v})$ for all $j$, $s$, and $\mathbf{v}$ in time $O(t_{\updatee} m^n |N|^2)$. Note that for a subset $S\subseteq N\setminus \{i\}$, we can compute agent $i$'s marginal contribution to $S$, i.e., $v(S\cup \{i\}) - v(S)$, in $O(t_{\updatee} + t_{\final})$ from the final values of $\mathbf{x}$ associated with the partition that $S$ belongs to, i.e., $\mathbf{x}_{S, \textrm{final}}$; let $m_i(\mathbf{v})$ be the agent $i$'s marginal contribution to subsets associated with $\mathbf{v}\in \mathcal{X}$. By Lemma~\ref{lemma:char2}, 
\[
\textstyle \phi_i =  \sum_{\mathbf{v} \in \mathcal{X}} \sum_{s=0}^{|N|-1} \hat{c}(|N|-1, s,\mathbf{v}) \frac{s!(|N|-s-1)!}{|N|!} m_i(\mathbf{v}),
\]
and the Shapley value can be calculated in time $O((t_{\updatee} + t_{\final}) m^n |N|)$ using the precomputed values of $\hat{c}$. The overall running time is $O(t_{\setup} + t_{\updatee} m^n |N|^2 + (t_{\updatee} + t_{\final}) m^n |N|)$.

Now assume that the agents have to be processed in a specific order determined by representation $I$. For a given $S$ and its final values $\mathbf{x}_{S, \textrm{final}}$, we cannot compute $\mathbf{x}_{S\cup\{i\},\textrm{final}}$ as $\AAA_{\updatee}(I', i, \mathbf{x}_{S, \textrm{final}})$ and compute agent $i$'s marginal contribution to $S$, because it would violate the order if some agents in $S$ have to be processed after $i$. Instead, we associate $S$ with the final values $\mathbf{x}_{S, \textrm{final}}$ and $\mathbf{x}_{S\cup\{i\}, \textrm{final}}$ and partition $2^{N\setminus\{i\}}$ by the tuple $(|S|, \mathbf{x}_{S,\textrm{final}}, \mathbf{x}_{S\cup\{i\}, \textrm{final}})$ into at most $m^{2n}|N|$ partitions, omitting the argument $I$. Following the same argument as before, we get the running time $O(t_{\setup} + t_{\updatee} m^{2n} |N|^2 + t_{\final} m^{2n} |N|)$.
\ifshowqed
\hfill \qed 
\fi
\end{proof}

The following definition and theorem generalize Theorem~\ref{thm:greedy} and can also be considered a specialization of Theorem~\ref{thm:gen1}. See Appendix~\ref{sec:appendixC} for proof details.

\begin{algorithm}[t]
\caption{Computing $A(S)$ with a per-element decomposition $\{(A_{\setup}^e, A_{\updatee}^e, A_{\final}^e)\}_{e\in M}$} \label{alg:elemdecomp} 
\begin{algorithmic}[1]
\State Initialize $S'=\emptyset$
\For{$e\in M$}
\State $A_{\setup}^e(M,w)$ outputs $I'$, $\mathbf{x}$
\State For $i\in S$: $\mathbf{x} = A_{\updatee}^e(I',i,\mathbf{x})$ 
\State If $A_{\final}(I', \mathbf{x}) = 1$, $S' = S'\cup \{e\}$
\EndFor
\State Return $S'$
\end{algorithmic}
\end{algorithm}

\begin{definition}\label{defn:elemdecomp}
Assume a cooperative game $(N,v)$ has an alternative representation $(M,w,A)$ as described in Lemma~\ref{lemma:char1} such that $v(S) = \sum_{e\in A(S)} w(e)$, for all $S\subseteq N$. Algorithm $A$ has a {\em per-element decomposition} $(A_{\setup}^e, A_{\updatee}^e, A_{\final}^e)$ for all $e\in M$ if $A(S)$ can be computed as in Algorithm~\ref{alg:elemdecomp}. We denote the upper bounds, over all $e\in M$, on running times of the sub-algorithms of the per-element decomposition $t_{\setup}$, $t_{\updatee}$ and $t_{\final}$, respectively.
\end{definition}

\begin{theorem}\label{thm:gen2}
Assume a cooperative game $(N,v)$ has an alternative representation $(M,w,A)$, as given in Lemma~\ref{lemma:char1}. If $A$ has a per-element decomposition $(A_{\setup}^e, A_{\updatee}^e, A_{\final}^e)$ for all $e\in M$ such that at most $n(M,w)$ variables $\mathbf{x}$ are used with each taking at most $m(M,w)$ possible values as $S$ ranges over all subsets of $N$ and $e$ over $M$, the Shapley value can be computed in $O((t_{\setup} + t_{\updatee} m^{n} |N|^2 + t_{\final}m^{n}|N|)|M|)$ for each agent. In order-specific cases, for Step 4 of Algorithm~\ref{alg:elemdecomp}, the running time is $O((t_{\setup} + t_{\updatee} m^{2n}|N|^2 + t_{\final}m^{2n}|N|)|M|)$. Note that $n$ and $m$ are representation-dependent numbers and the argument $(M,w)$ has been omitted.
\end{theorem}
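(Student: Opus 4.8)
The plan is to combine the per-element decomposition structure of Definition~\ref{defn:elemdecomp} with the counting argument from Lemma~\ref{lemma:char1} and the partition-and-count technique of Theorem~\ref{thm:gen1}. The key observation is that a per-element decomposition is, for each fixed $e\in M$, exactly an instance of the decomposition in Definition~\ref{defn:decomp}: $A_{\setup}^e$ plays the role of $A_{\setup}$, $A_{\updatee}^e$ the role of $A_{\updatee}$, and the Boolean output of $A_{\final}^e$ determines whether $e\in A(S)$. Since $v(S)=\sum_{e\in A(S)}w(e)$, Lemma~\ref{lemma:char1} tells us that the Shapley value decomposes as a sum over $e\in M$ of contributions, each of which depends only on the quantities $c_+(i,s,e)=\#\{S\subseteq N\setminus\{i\}:|S|=s,\ e\in A(S\cup\{i\})\}$ and $c_-(i,s,e)=\#\{S\subseteq N\setminus\{i\}:|S|=s,\ e\in A(S)\}$. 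So it suffices to show that for each fixed $e$ we can compute all these counts within the claimed per-$e$ time bound, and then multiply by $|M|$.

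First I would fix an agent $i$ and an element $e\in M$. Running $A_{\setup}^e$ once costs $t_{\setup}$. Then, exactly as in the proof of Theorem~\ref{thm:gen1}, I partition $2^{N\setminus\{i\}}$ by the pair $(|S|,\mathbf{x}_{S,\mathrm{final}})$, where $\mathbf{x}_{S,\mathrm{final}}$ is the state of the $\le n$ variables after $A_{\updatee}^e$ has processed the agents of $S$; there are at most $m^n|N|$ such classes, and the Boolean $A_{\final}^e(I',\mathbf{x}_{S,\mathrm{final}})$ — hence membership of $e$ in $A(S)$ — is constant on each class. I compute the class cardinalities $\hat c(j,s,\mathbf v)$ with the same dynamic program and recurrence as in Theorem~\ref{thm:gen1}, using $A_{\updatee}^e$ in place of $A_{\updatee}$, in time $O(t_{\updatee}m^n|N|^2)$; a final $O(t_{\final}m^n|N|)$ sweep reads off, for each $(s,\mathbf v)$, whether $S$'s with that signature contribute to $c_-(i,s,e)$. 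To get $c_+(i,s,e)$ I apply one more $A_{\updatee}^e$ step for agent $i$ on top of each final state (valid in the order-agnostic case, where $i$ may be appended last). Summing the per-$e$ cost $O(t_{\setup}+t_{\updatee}m^n|N|^2+t_{\final}m^n|N|)$ over all $e\in M$ gives the stated bound $O((t_{\setup}+t_{\updatee}m^n|N|^2+t_{\final}m^n|N|)|M|)$; plugging the counts into the Lemma~\ref{lemma:char1} formula costs only an additional $O(|M||N|)$, which is dominated.

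For the order-specific case, appending $i$ last is illegal, so — mirroring the second half of the proof of Theorem~\ref{thm:gen1} — I instead partition $2^{N\setminus\{i\}}$ by the triple $(|S|,\mathbf{x}_{S,\mathrm{final}},\mathbf{x}_{S\cup\{i\},\mathrm{final}})$, tracking in the dynamic program the pair of states obtained by running $A_{\updatee}^e$ over $S$ and over $S\cup\{i\}$ simultaneously (forking at the position where $i$ would be inserted in the prescribed order). This squares the state space to at most $m^{2n}$ per value of $s$, giving per-$e$ cost $O(t_{\setup}+t_{\updatee}m^{2n}|N|^2+t_{\final}m^{2n}|N|)$ and total $O((t_{\setup}+t_{\updatee}m^{2n}|N|^2+t_{\final}m^{2n}|N|)|M|)$. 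Both $c_+$ and $c_-$ are then read directly off the triples. I would also note that Theorem~\ref{thm:greedy} follows by exhibiting a concrete per-element decomposition for Algorithm~\ref{alg:greedy} and bounding $n$, $m$, $t_{\setup}$, $t_{\updatee}$, $t_{\final}$ for it.

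**The main obstacle** I anticipate is the bookkeeping in the order-specific dynamic program: one must be careful that forking the state at the correct insertion point for $i$ is well-defined (the prescribed order must be determined by $M$ and $w$, independently of which $S$ is chosen) and that the paired-state recurrence correctly accounts for whether the currently processed agent precedes or follows $i$. Everything else is a direct transcription of the arguments already given for Theorem~\ref{thm:gen1} and Lemma~\ref{lemma:char1}, with an extra factor of $|M|$ from iterating the whole construction over the $|M|$ elements.
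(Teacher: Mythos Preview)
Your proposal is correct and follows essentially the same approach as the paper: invoke Lemma~\ref{lemma:char1} to reduce to computing $c_\pm(i,s,e)$ for each $e\in M$, then for each fixed $e$ run the partition-and-count dynamic program of Theorem~\ref{thm:gen1} using $(A_{\setup}^e,A_{\updatee}^e,A_{\final}^e)$ in place of $(A_{\setup},A_{\updatee},A_{\final})$, handling the order-specific case by tracking paired states exactly as in Theorem~\ref{thm:gen1}. The paper's write-up is terser (it simply says ``follow the same line of reasoning as in Theorem~\ref{thm:gen1}'' for the order-specific case), but the substance is identical.
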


The above definitions apply broadly and suggest the following framework for cooperative games that we term {\em algorithmic representation}; we represent each cooperative game $(N,v)$ in terms of an alternative representation $I$ and an accompanying algorithm $A$ that computes $v$. As we can represent any cooperative game by a table with exponentially many entries for $v$ values and a simple lookup algorithm, the algorithmic representation always exist. The main challenge is to determine an ``efficient'' algorithmic representation for cooperative games in general. The algorithmic representation framework subsumes the concise representation schemes in~\cite{Conitzer,Ieong,Aadithya} as these assume specific structures on the alternative representation $I$. It also captures the notion of classes of cooperative games for we can represent a class of cooperative games by a set of alternative representations corresponding to those games in the class. In this framework, Theorems~\ref{thm:gen1} and \ref{thm:gen2} show that if the algorithms for computing $v$ satisfy the decomposability properties outlined in Definitions~\ref{defn:decomp} and \ref{defn:elemdecomp}, then the Shapley value can be computed efficiently as long as these algorithms are efficient.

Using the generalizations, we can reproduce many previous results on efficient computation of the Shapley value up to a (pseudo) polynomial factor in the running time.\footnote{The slightly slower running times can be attributed to our generalizations' inability to derive closed form expressions on a game-by-game basis; for instance, evaluating the sum $\sum_{i=1}^n i$ in $O(n)$ instead of using the identity $\frac{n(n+1)}{2} = \sum_{i=1}^n i$ in $O(1)$. As generalizations apply in a black-box manner, we argue the loss in running time is reasonable for (pseudo) polynomial time computation.} As concrete examples, we prove several such results (and a new one on the Data Mining Problem in Section~\ref{sec:intro}). We defer proofs to Appendix~\ref{sec:appendixC}:

\begin{corollary}(Weighted Majority Games)\label{cor:wmgames}
Assume a cooperative game $(N,v)$ has a representation given by $|N|+1$ nonnegative integers $q, w_1, \ldots, w_{|N|}$ such that $v(S)$ is 1 if $\sum_{i\in S} w_i \geq q$, or 0 otherwise. Then, the Shapley value can be computed in pseudo-polynomial time $O(q |N|^2)$ for each agent. (Identical to \cite{Matsui})
\end{corollary}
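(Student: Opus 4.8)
## Proof Proposal for Corollary~\ref{cor:wmgames}

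The plan is to apply Theorem~\ref{thm:gen1} with the most naive algorithmic representation. Take $I=(q,w_1,\ldots,w_{|N|})$ and let $A$ be the algorithm that, given $S$, maintains a single running partial sum of the weights \emph{truncated at $q$} and then reports whether this truncated sum has reached $q$. Concretely, in the notation of Definition~\ref{defn:decomp} and Algorithm~\ref{alg:decomp}: $A_{\setup}(I)$ outputs $I'=I$ together with the single variable $x=0$; $A_{\updatee}(I',i,x)$ returns $\min\{x+w_i,\,q\}$; and $A_{\final}(I',x)$ returns $1$ if $x\geq q$ and $0$ otherwise. Because the truncation $\min\{\cdot,q\}$ is applied at every step, once the running sum reaches $q$ it never changes, so $A(I,S)=1$ exactly when $\sum_{i\in S}w_i\geq q$; that is, $A(I,S)=v(S)$ for all $S\subseteq N$.

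First I would verify that this decomposition is order-agnostic: the truncated sum of a collection of weights does not depend on the order in which they are accumulated, so $A$ may process the agents of $S$ in an arbitrary order, and hence the first (order-agnostic) running-time bound of Theorem~\ref{thm:gen1} applies. Next I would count parameters: there is $n(I)=1$ variable $x$, and it ranges over $\{0,1,\ldots,q\}$, so $m(I)=q+1$; and each of the three sub-algorithms runs in constant time, so $t_{\setup}=t_{\updatee}=t_{\final}=O(1)$. Substituting into Theorem~\ref{thm:gen1} yields a per-agent running time of $O(t_{\setup}+t_{\updatee}m^{n}|N|^2+t_{\final}m^{n}|N|)=O((q+1)|N|^2)=O(q|N|^2)$, which matches the claimed bound. (Internally, Theorem~\ref{thm:gen1} is running the count dynamic program $\hat c(j,s,v)=\hat c(j-1,s,v)+\sum_{u:\,\min\{u+w_j,q\}=v}\hat c(j-1,s-1,u)$ over the states $v\in\{0,\ldots,q\}$, and combining it with Lemma~\ref{lemma:char2}; agent $i$'s marginal contribution to a class represented by truncated sum $v$ is $1$ precisely when $v<q$ and $v+w_i\geq q$, and $0$ otherwise.)

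I do not expect a genuine obstacle here; the only substantive modeling decision is to truncate the partial sum at $q$. This is exactly what keeps the number of reachable states at $q+1$ rather than $\sum_i w_i+1$ — without the truncation the same argument would only give the weaker bound $O\!\bigl((\sum_i w_i)\,|N|^2\bigr)$ — and it is the step a reader should check carefully: one must confirm that the truncated state still determines both $v(S)$ and every relevant marginal contribution, which it does since the predicate ``$\,\sum_{i\in S}w_i\geq q$'' and the predicate ``$\,\sum_{i\in S}w_i<q\leq\sum_{i\in S}w_i+w_i$'' are both functions of $\min\{w(S),q\}$ (and $w_i$) alone.
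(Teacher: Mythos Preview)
Your proposal is correct and is essentially identical to the paper's own proof: both apply the order-agnostic case of Theorem~\ref{thm:gen1} with the single state variable $x=\min\{w(S),q\}$, giving $n=1$, $m=q+1$, and constant-time sub-algorithms, hence the $O(q|N|^2)$ bound. The only cosmetic difference is that the paper's $A_{\final}$ tests $x=q$ rather than $x\geq q$, which is equivalent given the truncation.
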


\begin{corollary}(MC-net Representation)\label{cor:mcnet}
Assume a cooperative game $(N,v)$ has a marginal-contribution (MC) net representation with boolean rules $R=\{r_1, \ldots, r_m\}$ with each $r_i$ having value $v_i$ and of the form $(p_1 \wedge \ldots \wedge p_a \wedge \neg n_1 \wedge \ldots \wedge \neg n_b)$ such that $v(S) = \sum_{r_i: S \textrm{satisfies } r_i} v_i$ for all $S\subseteq N$.\footnote{If $r=(1 \wedge 2 \wedge \neg 3)$, then $S=\{1,2\}$ satisfies $r$, but $S=\{1, 3\}$ does not.} Then, the Shapley value can be computed in $O(m |N|^2 (\max_i |r_i|)^2)$ for each agent, where $|r|$ is the number of literals in rule $r$. (Compare to $O(m \max_i |r_i|)$, linear time in the representation size, in \cite{Ieong})
\end{corollary}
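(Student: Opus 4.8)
The plan is to instantiate Theorem~\ref{thm:gen2} with the alternative representation $(M,w,A)$ of Lemma~\ref{lemma:char1} given by $M=R$, $w(r_i)=v_i$, and $A(S)=\{r_i\in R: S \text{ satisfies } r_i\}$; then $v(S)=\sum_{r_i\in A(S)} v_i=\sum_{e\in A(S)} w(e)$, as required. What remains is to exhibit, for each rule $e=(p_1\wedge\cdots\wedge p_a\wedge\neg n_1\wedge\cdots\wedge\neg n_b)$, a per-element decomposition $(A_{\setup}^e, A_{\updatee}^e, A_{\final}^e)$ in the sense of Definition~\ref{defn:elemdecomp} whose accepting condition $A_{\final}^e=1$ coincides with ``$S$ satisfies $e$''. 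Note that the per-element form of Theorem~\ref{thm:gen2} is the one to use here: handling all rules in a single run of Algorithm~\ref{alg:decomp} would require a separate state block per rule, so $2m$ variables and an $m^{2m}$ blow-up, whereas Algorithm~\ref{alg:elemdecomp} treats the rules one at a time.

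The key observation is that rule satisfaction can be decided by scanning the agents of $S$ in any order while maintaining only two bounded counters. Take the state $\mathbf{x}=(x_1,x_2)$ where $x_1$ counts the positive literals of $e$ seen so far and $x_2\in\{0,1\}$ flags whether any negative literal of $e$ has been seen. $A_{\setup}^e$ builds from $e$ a length-$|N|$ table classifying each agent as a positive literal, negative literal, or irrelevant for $e$, and initializes $\mathbf{x}=(0,0)$; $A_{\updatee}^e(I',i,\mathbf{x})$ increments $x_1$ if $i$ is a positive literal of $e$, sets $x_2=1$ if $i$ is a negative literal, and leaves $\mathbf{x}$ unchanged otherwise; $A_{\final}^e(I',\mathbf{x})$ returns $1$ iff $x_1=a$ and $x_2=0$. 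This is order-agnostic, with $t_{\setup}=O(|N|)$, $t_{\updatee}=O(1)$, $t_{\final}=O(1)$, and at most $n=2$ variables, each taking at most $\max_i|r_i|+1$ values, so the state-space parameter $m$ of Theorem~\ref{thm:gen2} is $O(\max_i|r_i|)$ and $m^n=O((\max_i|r_i|)^2)$.

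Plugging these bounds, together with $|M|=m$ (the number of rules), into the order-agnostic running time $O((t_{\setup}+t_{\updatee}m^n|N|^2+t_{\final}m^n|N|)|M|)$ of Theorem~\ref{thm:gen2} gives $O(m|N|^2(\max_i|r_i|)^2)$ per agent; one checks routinely that the $t_{\setup}|M|=O(m|N|)$ term and the $t_{\final}m^n|N||M|$ term are both dominated by the leading $t_{\updatee}m^n|N|^2|M|$ term, which equals the claimed bound.

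The derivation is largely mechanical; the only point requiring care is the design of the per-rule state so that $n$ stays constant ($n=2$) and $m$ stays linear in $|r_i|$ --- tracking, say, the full set of literals already encountered would make $m^n$ exponential. The gap between our $O(m|N|^2(\max_i|r_i|)^2)$ and the $O(m\max_i|r_i|)$ bound of~\cite{Ieong} is exactly the black-box polynomial overhead discussed in the footnote above: the generalization cannot invoke the closed-form Shapley expression available for a single MC-net rule, and instead pays a $|N|^2$ factor for the counting dynamic program and an extra $\max_i|r_i|$ factor for enumerating counter states.
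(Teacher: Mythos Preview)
Your proposal is correct and follows essentially the same route as the paper: both instantiate Theorem~\ref{thm:gen2} with $M=R$, $w(r)=v_r$, $A(S)=\{r:S\text{ satisfies }r\}$, build the same per-rule lookup table in $A_{\setup}^e$, and maintain two counters over the agents of $S$. The only cosmetic difference is in the second counter: the paper initializes $x_2=b$ and decrements it for each negative-literal agent (testing $x_2=b$ at the end), whereas you use a $\{0,1\}$ flag; both detect the same event ``no negative literal of $e$ occurs in $S$'' and yield the same $n=2$, $m=O(\max_i|r_i|)$ bounds and hence the stated running time.
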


\begin{corollary}(Multi-Issue Representation)\label{cor:multiissue}
Assume a cooperative game $(N,v)$ has a multi-issue representation with subsets $C_1, \ldots, C_t \subseteq N$ and characteristic functions $v_i:2^{C_i} \rightarrow \mathbb{R}$ for all $i$ such that $v(S) = \sum_{i=1}^t v_i(S\cap C_i)$ for all $S\subseteq N$. Then, the Shapley value can be computed in $O(t2^{\max_i |C_i|} |N|^2 \max_i |C_i|)$ for each agent. (Compare to $O(t2^{\max_i |C_i|})$ in \cite{Conitzer})
\end{corollary}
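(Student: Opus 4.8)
The plan is to combine the Linearity property of the Shapley value with Theorem~\ref{thm:gen1}, applied one issue at a time rather than to $v$ as a whole. First I would invoke Linearity: for each $j\in\{1,\dots,t\}$ define the cooperative game $(N,v'_j)$ by $v'_j(S)=v_j(S\cap C_j)$ — this is a well-defined characteristic function with $v'_j(\emptyset)=v_j(\emptyset)=0$ under the standard assumption $v_j(\emptyset)=0$ — and observe $\sum_{j=1}^t v'_j = v$, whence $\phi_i(v)=\sum_{j=1}^t \phi_i(v'_j)$ for every agent $i$. It therefore suffices to compute each $\phi_i(v'_j)$ within $O(2^{\max_j|C_j|}|N|^2\max_j|C_j|)$ and sum the $t$ contributions.

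Second I would give an algorithmic representation of $v'_j$ with a decomposition in the sense of Definition~\ref{defn:decomp}. Take the representation $I=(C_j,v_j)$ with $v_j$ stored as a lookup table indexed by subsets of $C_j$, and let the algorithm $A$ carry a single variable $\mathbf{x}$ holding the subset $S\cap C_j$, encoded as a bit vector of length $|C_j|$. Concretely: $A_{\setup}(I)$ outputs $\mathbf{x}=\emptyset$; $A_{\updatee}(I',i,\mathbf{x})$ returns $\mathbf{x}\cup\{i\}$ if $i\in C_j$ and $\mathbf{x}$ otherwise; $A_{\final}(I',\mathbf{x})$ returns the table entry $v_j(\mathbf{x})$. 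Then $A(I,S)=v_j(S\cap C_j)=v'_j(S)$, the order of processing is irrelevant (order-agnostic case), the number of variables is $n=1$, and $\mathbf{x}$ takes at most $m=2^{|C_j|}\le 2^{\max_j|C_j|}$ distinct values as $S$ ranges over $2^N$. The sub-algorithms run in $t_{\setup}=O(1)$, $t_{\updatee}=O(\max_j|C_j|)$ (a membership test and a bit flip), and $t_{\final}=O(\max_j|C_j|)$ (a table index).

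Third I would apply Theorem~\ref{thm:gen1} in the order-agnostic case, which gives running time $O(t_{\setup}+t_{\updatee}m^n|N|^2+t_{\final}m^n|N|)=O(2^{\max_j|C_j|}|N|^2\max_j|C_j|)$ for computing $\phi_i(v'_j)$ with $i$ and $j$ fixed; when $i\notin C_j$ the algorithm simply returns $0$, matching the fact that $i$ is then a null player in $v'_j$. Summing over the $t$ issues yields the claimed bound $O(t\,2^{\max_j|C_j|}|N|^2\max_j|C_j|)$.

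The computation is routine once the decomposition is in place; the only genuine decision is the first one. Feeding $v$ directly into Theorem~\ref{thm:gen1} would force $\mathbf{x}$ to record $S\cap C_j$ for all $j$ simultaneously, giving $n=t$ variables and a state space of size $m^n=2^{\Theta(t\max_j|C_j|)}$ — exponential in the number of issues. Recognizing that Linearity decouples the issues, so that each can be handled in isolation with a one-variable state, is the crux of the argument and is precisely what keeps the dependence on $t$ linear.
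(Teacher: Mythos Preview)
Your proof is correct, and it takes a genuinely different route from the paper. The paper does \emph{not} invoke Linearity at all; instead it applies Theorem~\ref{thm:gen2} (the per-element decomposition) directly to $v$, with element set $M=\{e_{jC}:1\le j\le t,\ C\subseteq C_j\}$, weights $w(e_{jC})=v_j(C)$, and $A(S)=\{e_{1,S\cap C_1},\dots,e_{t,S\cap C_t}\}$; for each fixed $e_{jC}$ it tracks a small counter-style state to decide whether $S\cap C_j=C$. Your approach first uses the Linearity axiom to split $v$ into the $t$ single-issue games $v'_j$, and then feeds each $v'_j$ into the simpler Theorem~\ref{thm:gen1} with a one-variable state $\mathbf{x}=S\cap C_j$. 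Both routes land on the same bound $O\!\bigl(t\,2^{\max_j|C_j|}|N|^2\max_j|C_j|\bigr)$. What your approach buys is conceptual directness: it mirrors the original Conitzer--Sandholm insight that multi-issue games are tailor-made for Linearity, and it needs only the basic decomposition theorem rather than the per-element machinery. What the paper's approach buys is uniformity: it stays entirely inside the algorithmic-representation framework without appealing to an external axiom, illustrating that Theorem~\ref{thm:gen2} alone is expressive enough to handle the multi-issue case. Your closing paragraph correctly identifies the key design decision---decoupling issues so the state space is $2^{|C_j|}$ rather than $\prod_j 2^{|C_j|}$---and both proofs make that move, just via different mechanisms (Linearity versus per-element enumeration of $M$).
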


\begin{corollary}(Data Mining Problem)\label{cor:topk}
Assume a cooperative game $(N,v)$ has a representation given by $|N|+1$ nonnegative integers $k, w_1, \ldots, w_{|N|}$ such that $v(S) = \max_{S'\subseteq S: |S'| \leq k} w(S')$. Then, the Shapley value can be computed in polynomial time $O(|N|^3)$ for each agent. (This is our own problem.) 
\end{corollary}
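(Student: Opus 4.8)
The plan is to route the computation through Lemma~\ref{lemma:char1} and then count the relevant families of subsets with binomial coefficients. We may assume $k<|N|$, since otherwise $v(S)=w(S)$ for all $S$ and $\phi_i=w_i$ trivially. Fix the tie-breaking total order $\prec$ on $N$ in which $a\prec b$ (``$a$ beats $b$'') when $w_a>w_b$, or when $w_a=w_b$ and $a<b$, and for $S\subseteq N$ let $\mathrm{top}_k(S)$ be the set of the $\min(|S|,k)$ smallest agents of $S$ under $\prec$. Since $\max_{S'\subseteq S:\,|S'|\le k} w(S')$ is attained by the $\min(|S|,k)$ heaviest agents, we have $v(S)=\sum_{e\in\mathrm{top}_k(S)}w_e$ for all $S$, so the game admits the representation $(M,w,A)$ of Lemma~\ref{lemma:char1} with $M=N$, $w(e)=w_e$, and $A(S)=\mathrm{top}_k(S)$. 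Lemma~\ref{lemma:char1} then gives
\[
\phi_i=\sum_{j\in N}\sum_{s=0}^{|N|-1}\bigl(c_+(i,s,j)-c_-(i,s,j)\bigr)\frac{s!\,(|N|-s-1)!}{|N|!}\,w_j ,
\]
where $c_+(i,s,j)=\#\{S\subseteq N\setminus\{i\}:|S|=s,\ j\in\mathrm{top}_k(S\cup\{i\})\}$ and $c_-(i,s,j)=\#\{S\subseteq N\setminus\{i\}:|S|=s,\ j\in\mathrm{top}_k(S)\}$; so it suffices to compute all these coefficients for the fixed agent $i$ in total time $O(|N|^3)$.

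The heart of the argument is a closed form for these coefficients. The key observation is that $j\in\mathrm{top}_k(T)$ if and only if $j\in T$ and at most $k-1$ agents of $T$ beat $j$. Let $\beta_j$ be the number of agents of $N$ that beat $j$. For $j\ne i$, I would build each counted set $S$ (with $j\in S$) by choosing $t$ of the $\beta_j-\Ind(i\prec j)$ agents of $N\setminus\{i\}$ that beat $j$ and $s-1-t$ of the remaining $|N|-2-\beta_j+\Ind(i\prec j)$ agents of $N\setminus\{i,j\}$ that $j$ beats; summing over $t$ up to $k-1$ gives $c_-(i,s,j)$ as a sum of at most $k$ products of two binomial coefficients, and the same sum with the cutoff lowered to $k-1-\Ind(i\prec j)$ gives $c_+(i,s,j)$ (adding $i$ makes $i$ one more beater of $j$). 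For $j=i$ one has $c_-(i,s,i)=0$, while $c_+(i,s,i)=\sum_{t\le k-1}\binom{\beta_i}{t}\binom{|N|-1-\beta_i}{s-t}$. (Incidentally this shows $c_+(i,s,j)=c_-(i,s,j)$ whenever $j\ne i$ and $i\not\prec j$, so only $i$ itself and the agents it beats contribute to $\phi_i$ — a useful sanity check, and an additional speedup that is not needed for the stated bound.)

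For the running time, I would precompute all binomial coefficients $\binom{a}{b}$, $0\le b\le a\le|N|$, in $O(|N|^2)$ time, and all the $\beta_j$ in $O(|N|\log|N|)$ time by sorting under $\prec$. Then for each of the $O(|N|^2)$ pairs $(j,s)$ one evaluates $c_+(i,s,j)-c_-(i,s,j)$ in $O(k)=O(|N|)$ time from the precomputed quantities and adds $(c_+(i,s,j)-c_-(i,s,j))\frac{s!(|N|-s-1)!}{|N|!}w_j$ to a running total for $\phi_i$, exactly as in the formula of Lemma~\ref{lemma:char1}. This yields $O(|N|^3)$ per agent. The main obstacle is the bookkeeping in the middle step: handling $\mathrm{top}_k$-membership correctly under the tie-breaking rule and the case split on whether $i$ beats $j$ (and on $j=i$); once the binomial counts are pinned down, everything else is routine.
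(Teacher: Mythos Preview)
Your argument is correct. You and the paper both start from the same representation $(M,w,A)$ with $M=N$ and $A(S)=\mathrm{top}_k(S)$, but you then diverge: the paper invokes its general per-element decomposition machinery (Theorem~\ref{thm:gen2}), using a single bounded counter (``number of agents beating $e$, capped at $k$'') as the state variable and reading off the running time from the framework. You instead stay at the level of Lemma~\ref{lemma:char1} and compute $c_\pm(i,s,j)$ in closed form via binomial coefficients, exploiting directly that $j\in\mathrm{top}_k(T)$ iff at most $k-1$ members of $T$ beat $j$. Your route is more elementary and self-contained, and it surfaces the structural fact that only $j=i$ and the agents $i$ beats contribute to $\phi_i$ (indeed, for those $j$ the difference $c_+-c_-$ collapses to a single binomial product, which would even give a sharper bound than you claim). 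The paper's route, by contrast, is the point of the corollary in context: it is meant to illustrate that the algorithmic-representation framework recovers such results mechanically, without bespoke counting.
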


%

\section{Further Discussion}\label{sec:discuss}

We have introduced a class of cooperative games called budgeted games and investigated the computational complexity of the Shapley value in the knapsack version, knapsack budgeted games, in particular. We presented exact and approximation algorithms for knapsack budgeted games and a pseudo-polynomial time algorithm for closely related greedy knapsack budgeted games. These algorithms have only polynomial dependence on $|N|$, the number of agents, and are more efficient than the na\"ive exponential time algorithm when $|N|$ is large. Our results extend to knapsack budgeted games with multiple budget constraints. We believe knapsack budgeted games are useful in modeling value division problems in real-life scenarios and our algorithms applicable; for example, when finding a profit distribution solution for a joint venture of, say, 100-plus agents. 

We also provided generalizations and proposed the algorithmic representation framework in which we represent each cooperative game in terms of an alternative representation and an accompanying algorithm that computes the underlying value function. We formalized efficient algorithmic representations and used the generalizations to show that computing the Shapley value in those cooperative games with efficient algorithmic representations can be done efficiently. To demonstrate the generalizations' applicability, we proved old and new results on the efficient computation of the Shapley value. 

We note that further improvement to our algorithmic results might be possible. While the exact algorithm in Theorem~\ref{thm:exact} has polynomial time dependence on $|N|$, it is not a pseudo-polynomial time algorithm and the hardness result in Theorem~\ref{thm:nphard} does not preclude the existence of a polynomial time algorithm for the Shapley value computation in the restricted case of $|N| \gg l_{\bin}$.\footnote{In this case, the $O(|N| l_{\bin})$ dynamic programming time algorithm for the classical knapsack problem algorithm in Footnote~\ref{fn:classicaldp} becomes an $O(|N|^2)$ algorithm, and the classical knapsack problem can be solved in polynomial time.} Similarly, we do not know if the results in Theorems~\ref{thm:approx} and \ref{thm:greedy} are the best possible. We pose these as open problems.

Finally, we believe our techniques can have applications beyond the games considered in this paper and to other economic concepts such as the Banzhaf index. It would be also interesting to investigate the computational complexity of the Shapley value in other kinds of budgeted games.



\paragraph{\bf Acknowledgements.}
We would like to thank Vasilis Gkatzelis for his helpful comments.

\bibliographystyle{abbrv}
\bibliography{mybib}

\newpage
\appendix

\section{Missing Proofs from Section~\ref{sec:shapley}}\label{sec:appendixA}

\begin{proof}(Theorem~\ref{thm:nphard})
We reduce the decision version of the classical knapsack problem, which is NP-complete, to the problem of computing the Shapley value. If there is a polynomial time algorithm for the Shapley value computation, we use it to compute the Shapley value of the agents, $\phi_1, \ldots, \phi_{|N|}$. By the efficiency property, $\phi_1 + \cdots + \phi_{|N|} = v(N)$, which is exactly the solution of the optimization version of the classical knapsack problem. Then, we can solve the decision version. 
\ifshowqed
\hfill \qed 
\fi
\end{proof}

\begin{proof}(Lemma~\ref{lemma:char2})
We start from \eqref{eq:shapley_set}:
\begin{align*}
\phi_i &= \sum_{S\subseteq N\setminus \{i\}} f(|S|) (v(S\cup\{i\}) - v(S)) \\
	&=\sum_{S\subseteq N\setminus \{i\}} \sum_{p=1}^{p_i} \sum_{s=0}^{|N|-1} f(s) m_i(p) \Ind(|S|=s, P_i(S)=p) \\
	&=\sum_{p=1}^{p_i} \sum_{s=0}^{|N|-1} f(s) m_i(p) \left(\sum_{S\subseteq N\setminus \{i\}} \Ind(|S|=s, P_i(S)=p) \right) \\
	&=\sum_{p=1}^{p_i} \sum_{s=0}^{|N|-1} c(i,p,s) \frac{s!(N-s-1)!}{N!} m_i(p),
\end{align*}
where $f(x) = \frac{x!(|N|-x-1)!}{|N|!}$ for a nonnegative integer $x$. Given the Shapley value equation, the running time is straightforward to obtain. 
\ifshowqed
\hfill \qed
\fi
\end{proof}

\begin{proof}(Lemma~\ref{lemma:approx})
We bound the agent $i$'s marginal contribution to $S$ with respect to $v'$: $v'(S\cup\{i\}) - v'(S) \leq v(S\cup \{i\}) - (v(S) - \alpha) = v(S\cup \{i\}) -v(S) + \alpha$, and $v'(S\cup \{i\}) - v'(S) \geq v(S \cup \{i\}) - v(S) - \alpha$. Then, marginal contributions computed with respect to $v$ and $v'$ are within $\alpha$ of each other. Using the Shapley value equation $\phi_i = \frac{1}{|N|!} \sum_{\pi \in \Pi} v(P_\pi^i \cup \{i\}) - v(P_\pi^i)$,
\begin{align*}
|\phi_i - \phi'_i| &= \left|\frac{1}{|N|!} \sum_{\pi \in \Pi} (v(P_\pi^i \cup \{i\}) - v(P_\pi^i)) - \frac{1}{|N|!} \sum_{\pi \in \Pi} (v'(P_\pi^i \cup \{i\}) - v'(P_\pi^i)) \right| \\
	&\leq \frac{1}{|N|!} \sum_{\pi \in \Pi} \left|\left(v(P_\pi^i \cup \{i\}) - v(P_\pi^i)\right) -\left(v'(P_\pi^i \cup \{i\}) - v'(P_\pi^i) \right) \right| \\
	&\leq \frac{1}{|N|!} \sum_{\pi \in \Pi} \alpha \\
	&= \alpha.
\end{align*}
\ifshowqed
\hfill \qed  
\fi
\end{proof}

\section{Missing Proofs from Section~\ref{sec:greedy}}\label{sec:appendixB}

\begin{proof}(Lemma~\ref{lemma:char1})
The proof is nearly identical to that of Lemma~\ref{lemma:char2}. We start from \eqref{eq:shapley_set}:
\begin{align*}
\phi_i &= \sum_{S\subseteq N\setminus \{i\}} f(|S|) (v(S\cup\{i\}) - v(S)) \\
	&= \sum_{S\subseteq N\setminus \{i\}} f(|S|) \left(\sum_{e\in A(S\cup \{i\})} w(e) - \sum_{e\in A(S)} w(e) \right) \\
	&= \sum_{S\subseteq N\setminus \{i\}} \sum_{e\in M} \sum_{s=0}^{|N|-1} f(s)w(e) \left(\Ind(C_s,e\in A(S\cup \{i\})) - \Ind(C_s,e\in A(S))  \right)\\
	&= \sum_{e\in M} \sum_{s=0}^{|N|-1} f(s)w(e) \left(\sum_{S\subseteq N\setminus \{i\}} \Ind(C_s,e\in A(S\cup \{i\})) - \sum_{S\subseteq N\setminus \{i\}} \Ind(C_s,e\in A(S))  \right)\\
	&=\sum_{e\in M} \sum_{s=0}^{|N|-1} \left(c_+(i,e,s) - c_-(i,e,s) \right) \frac{s!(|N|-s-1)!}{|N|!} w(e),
\end{align*}
where $f(x) = \frac{x!(|N|-x-1)!}{|N|!}$ for a nonnegative integer $x$ and $C_s$ is the clause $|S|=s$.

Given the Shapley value equation, the running time is straightforward to obtain.
\ifshowqed
\hfill \qed 
\fi
\end{proof}

\begin{proof}(Theorem~\ref{thm:greedy})
We compute the Shapley value $\phi_i$ for some fixed agent $i$. In what follows, we assume that the agents are sorted and reindexed so that $\frac{w_1}{l_1} \geq \ldots \geq \frac{w_{|N|}}{l_{|N|}}$. If there are multiple agents with the same maximum weight for the $\argmax$ operator, we choose the one with the lowest index.

For ease of exposition, we use $\AAA''$ to denote Step 2 of the greedy heuristic so that $\AAA''(S, b)$ is exactly the set $S'$ in $\AAA'(S,b)$. We also drop the bin size $b$ when it is equal to $l_{\bin}$. Note $v(S) = \sum_{k\in \AAA'(S)} w_k = w(\AAA'(S))$ for all $S\subseteq N$. In order to use Lemma~\ref{lemma:char1}, we choose the alternative representation $(M, w, A)$ where $M=N$, the weight function $w$ is such that $w(e) = w_e, \forall e$, and $A$ is the greedy heuristic $\AAA'$. We consider three cases: $e=i$, $e>i$, and $e<i$.

Case 1) $e=i$: For $c_-(i,s,i)$, we count subsets $S\subseteq N \setminus \{i\}$ such that $i \in \AAA'(S)$. Clearly, $c_-(i,s,i) = 0$. For $c_+(i,s,i)$, we count subsets $S\subseteq N\setminus \{i\}$ such that $i\in \AAA'(S\cup\{i\})$. Note $i\in \AAA'(S\cup \{i\}$) if 1) agent $i$ is included in $S'$ in Step 2 of $\AAA'$ and $S'$ is finally returned in Step 3, or 2) agent $i$ is not included in $S'$, but is selected in Step 1 and finally returned in Step 3. To each subset $S\subseteq N\setminus\{i\}$, we divide the set $S$ into two parts, $S_{1, i-1}$ and $S_{i+1, |N|}$, and associate tuple $(s_1, \hat{l}, \hat{w}_1, a_1, s_2, \hat{w}_2, a_2)$ where $|S_{1, i-1}| = s_1$, $l(\AAA''(S_{1,i-1})) = \hat{l}$, $w(\AAA''(S_{1,i-1})) = \hat{w}_1$, $\argmax_{k\in S_{1,i-1}} w_k = a_1$, $|S_{i+1, |N|}| = s_2$, $w(\AAA''(S_{i+1,|N|})) = \hat{w}_2$, and $\argmax_{k\in S_{i+1,|N|}} w_k = a_2$. Note the tuple has enough information to reconstruct the steps of the greedy heuristic and check aforementioned conditions. Since the total number of possible tuples is bounded, we count the number of sets $S\subseteq N\setminus \{i\}$ for each tuple and sum up to determine $c_+(i,s,i)$. We further decouple steps of $\AAA'$ on $S_{1,i-1}$ and $S_{i+1, |N|}$ by using intermediate bin sizes, the parameter $b$ in the coefficient $\hat{c}_b^2$ below, and consider each part independently using smaller tuples $(s_1, \hat{l}, \hat{w}_1, a_1)$ and $(s_2, \hat{w}_2, a_2)$, respectively. 

We define $\hat{c}^1(j,s_1,\hat{l},\hat{w}_1,a_1) = \#\{S \subseteq N_{1,j}: |S|=s_1, l(\AAA''(S)) = \hat{l}, w(\AAA''(S))=\hat{w}_1, \argmax_{k\in S}w_k = a_1\}$, for $1 \leq j \leq i-1$, and $\hat{c}^2_{b}(j,s_2,\hat{w}_2,a_2) = \#\{S \subseteq N_{i+1, j}: |S| = s_2, w(\AAA''(S,b))=\hat{w}_2, \argmax_{k\in S}w_k = a_2\}$, for $i+1\leq j \leq |N|$ and $0\leq b \leq l_{\bin}$. Then, the number of subsets $S\subseteq N\setminus \{i\}$ with tuple $(s_1, \hat{l}, \hat{w}_1, a_1, s_2, \hat{w}_2, a_2)$ with $i$ is contained in $S'$ in Step 2 of $\AAA'(S)$ is the product of the number of subsets $S_1 \subseteq N_{1,i-1}$ with tuple $(s_1, \hat{l}, \hat{w}_1, a_1)$ and the number of subsets $S_2 \subseteq N_{i+1, |N|}$ with tuple $(s_2, \hat{w}_2, a_2)$, where the effective bin size for $S_2$ is $l_{\bin}-\hat{l} - l_i$. These sets satisfy the first set of conditions for $i\in \AAA'(S\cup\{i\})$ if and only if $\Ind(\hat{l}\leq l_{\bin}-l_i, \hat{w}_1+\hat{w}_2+w_i \geq \max\{w_{a_1}, w_{a_2}\}) = 1$. We reason similarly for the second set of conditions and get $c_+(i,s,i) = \sum \left(\hat{c}^1\cdot \hat{c}^2_{l_{\bin}-\hat{l}-l_i} \cdot \Ind(C_1, C_2)+ \hat{c}^1\cdot \hat{c}^2_{l_{\bin}-\hat{l}} \cdot \Ind(C_3, C_4, C_5) \right)$, where the summation is over $s_1+s_2 = s$, $1\leq a_1 < a_2 \leq |N|$, $0 \leq \hat{l} \leq l_{\bin}$, $0\leq \hat{w}_1, \hat{w}_2 \leq w_{\max}|N|$; $\hat{c}^1$'s have argument $(i-1,s_1, \hat{l}, \hat{w}_1, a_1)$ and $\hat{c}^2$'s have argument $(|N|, s_2, \hat{w}_2, a_2)$; the conditions are $C_1 = (\hat{l}\leq l_{\bin}-l_i)$, $C_2=( \hat{w}_1+\hat{w}_2+w_i \geq \max\{w_{a_1}, w_{a_2}\})$, $C_3 =(\hat{l}>l_{\bin}-l_i)$, $C_4 = (i = \argmax_{k\in\{a_1,a_2,i\}} w_k)$, and $C_5 = (w_i >\hat{w}_1+\hat{w}_2)$. Using dynamic programming, $\hat{c}^1$ can be computed in $O(l_{\bin} w_{\max} |N|^3)$ and $\hat{c}^2_b$ for all $0\leq b\leq l_{\bin}$ can be computed in $O(l_{\bin} w_{\max} |N|^3)$. Hence, each coefficient $c_+(i,s,i)$ can be computed in $O(l_{\bin}{w_{\max}}^2|N|^4)$ using precomputed values of $\hat{c}^1$ and $\hat{c}^2_b$.

The analyses for the other cases are similar. We divide the set $N$ at $i$ and $e$ into three parts and associate each subset satisfying $e\in \AAA'(S\cup \{i\})$ or $e\in \AAA'(S)$ a tuple that summarizes the steps of the greedy heuristic over these parts. Then, we proceed as before and compute each coefficient, $c_+(i,s,e)$ and $c_-(i,s,e)$, in $O({l_{\bin}}^5 {w_{\max}}^5 |N|^{6})$. 

Putting all the cases together, we can compute the necessary coefficients $c(i,e,s)$ in time $O({l_{\bin}}^5 {w_{\max}}^5 |N|^{6})$ per coefficient. By Lemma~\ref{lemma:char1}, we can compute the Shapley value in pseudo-polynomial time of $O({l_{\bin}}^5 {w_{\max}}^5 |N|^{8})$. 
\ifshowqed
\hfill \qed
\fi 
\end{proof}

\section{Missing Proofs from Section~\ref{sec:general}}\label{sec:appendixC}

\begin{proof}(Theorem~\ref{thm:gen2})
Given the alternative representation $(M,w)$, we compute the Shapley value of agent $i$ using Lemma~\ref{lemma:char1}. We show how to compute the quantity $\phi_{i,e} = \sum_{s=0}^{|N|-1} (c_+(i,s,e) - c_-(i,s,e)) \frac{s!(|N|-s-1)!}{|N|!}$ for some arbitrary $e\in M$. We associate $v(S)$ with the final states, $\mathbf{x}_{S, \textrm{final}}$, of $n(M,w)$ variables $\mathbf{x}$ in $A$ for all $S\subseteq N\setminus \{i\}$. We partition $2^{N \setminus \{i\}}$ by the pair $(|S|, \mathbf{x}_{S, \textrm{final}})$ into at most $m^{n}|N|$ partitions, omitting the argument $(M,w)$ from $n$ and $m$.Let $\mathcal{X}$ be the set of all final values of the variables $\mathbf{x}$ in $A$. We compute the cardinalities of the partitions using dynamic programming. Let $N' = N\setminus\{i\}$, ordered and relabeled $1,\ldots, |N|-1$ and $i=|N|$. For $0\leq j \leq |N|-1$, $0 \leq s \leq j$, and $\mathbf{v} \in \mathcal{X}$, we define $\hat{c}(j,s,\mathbf{v}) = \# \left\{S\subseteq N'_{1,j}: |S|=s, \mathbf{x}_{S, \textrm{final}} = \mathbf{v}\right\}$. $\hat{c}$ has the recurrence relation $\hat{c}(j,s,\mathbf{v}) = \hat{c}(j-1, s, \mathbf{v}) + \sum_{\mathbf{u}: \AAA_{\updatee}^e(I',j, \mathbf{u})= \mathbf{v}} \hat{c}(j-1,s-1,\mathbf{u})$ with the base case $\hat{c}(0, 0, \mathbf{s}) = 1$, where $\mathbf{s}$ is the initial states of variables $\mathbf{x}$. Using $A_{\setup}^e$, we compute $I'$ and the inital values $\mathbf{s}$ in $O(t_{\setup})$. Using the recurrence relation and $\AAA_{\updatee}^e$, we compute $\hat{c}(j,s, \mathbf{v})$ for all $j$, $s$, and $\mathbf{v}$ in time $O(t_{\updatee} m^n |N|^2)$. Note that for a subset $S\subseteq N\setminus \{i\}$, we can compute if $e\in A(S)$ and if $e\in A(S\cup\{i\})$ in $O(t_{\final})$ using the information associated with the partition that $S$ belongs to. Then, we can compute the quantity $\phi_{i,e}$ in $O(t_{\final}m^{n}|N|)$ using precomputed values of $\hat{c}$, and the Shapley value $\phi_i = \sum_{e\in M} \phi_{i,e}$ in $O((t_{\setup} + t_{\updatee} m^{n} |N|^2 + t_{\final}m^{n}|N|)|M|)$ overall.

For the order-specific case, we follow the same line of reasoning as in Theorem~\ref{thm:gen1}.
\ifshowqed
\hfill \qed
\fi 
\end{proof}

\begin{proof}(Corollary~\ref{cor:wmgames})
We use the order-agnostic version of Theorem~\ref{thm:gen1} with representation $I=(q, w_1, \ldots, w_{|N|})$. We construct algorithm $A$ with decomposition $(A_{\setup}, A_{\updatee}, A_{\final})$ as follows. As variables $\mathbf{x}$, we use an indicator variable taking values $0, \ldots, q$ to represent that total weight of subset of agents, capped at $q$. In $A_{\setup}$, we initialize $\mathbf{x}$ to 0. In $A_{\updatee}$, for given $i\in S$, we update the indicator variable by adding $w_i$ to its value while capping the total at $q$, In $A_{\final}$, we return $1$ if $\mathbf{x}=q$, or $0$ otherwise. Therefore, $n(I)=1$, $m(I)=q+1$, $t_{\setup} = O(1)$, $t_{\updatee} = O(1)$, and $t_{\final} = O(1)$. This leads to an $O(q|N|^2)$ algorithm for computing the Shapley value for each agent.
\ifshowqed
\hfill \qed
\fi 
\end{proof}

\begin{proof}(Corollary~\ref{cor:mcnet})
We use the order-agnostic version of Theorem~\ref{thm:gen2} with representation $(M,w,A)$ where $M=\{e_r: r\in R\}$, $w(e_r) = v_r$ for each rule $r$, and $A(S) = \{e_r : S \textrm{ satisfies } r\}$. For each $e_r\in M$, we construct a per-element decomposition $(A_{\setup}, A_{\updatee}, A_{\final})$ for $A$, omitting the superscript $e$. Assume the rule $r$ is $(p_1 \wedge \ldots \wedge p_a \wedge \neg n_1 \wedge \ldots \wedge \neg n_b)$ where each literal represents an agent. In $A_{\setup}$, we create a $0/\pm1$ array of length $|N|$ with +1's in the positions corresponding to agents with positive literals in $r$, $-1$'s in the positions corresponding to agents with negative literals, and $0$'s elsewhere. As variables $\mathbf{x}$, we use two counters: $x_1$ that is initialized to $0$ and incremented by 1 for each occurrence of positive literal agent in $S$, and $x_2$ that is initialized to $b$ and decremented by 1 for each occurrence of negative literal agent in $S$. In words, $x_1$ is equal to the number of positive literal agents in $S$ and $b-x_2$ is equal to the number of negative literal agents in $S$. In $A_{\updatee}$, for given $i\in S$, we check the agent's value in the array and update the variables accordingly. In $A_{\final}$, we output 1 if $x_1=a$ and $x_2 = b$, or 0 otherwise. Therefore, $n(M,w) = 2$, $m(M,w) = max_i |r_i|$, $t_{\setup} = O(|N|)$, $t_{\updatee} = O(1)$, and $t_{\final} = O(1)$. This leads to an $O(m |N|^2 (\max_i |r_i|)^2)$ algorithm for computing the Shapley value for each agent.
\ifshowqed
\hfill \qed
\fi 
\end{proof}

\begin{proof}(Corollary~\ref{cor:multiissue})
We use the order-agnostic version of Theorem~\ref{thm:gen2} with representation $(M,w,A)$ where $M=\{e_{iC}: 1\leq i \leq t, C\subseteq C_i\}$, $w(e_{iC}) = v_i(C)$ for all $i$ and $C\subseteq C_i$, and $A(S) = \{e_{1, S\cap C_1}, \ldots, e_{t, S\cap C_t}\}$ for all $S\subseteq N$. Note $|M| = \sum_i 2^{|C_i|}$. We construct per-element decomposition $(A_{\setup}, A_{\updatee}, A_{\final})$ for each $e\in M$ for $A$, omitting superscript $e$. Assume $e_{iC} \in M$. As variable $\mathbf{x}$, we use a counter for the number of elements of $S$, to be given as an input, in $C$. In $A_{\setup}$, we initialize a $0/1$ array of length $|N|$ with 1's in the positions corresponding to elements in $C$ and 0's elsewhere.  In $A_{\updatee}$, for given $i\in S$, we increment $\mathbf{x}$ if $i \in C$. In $A_{\final}$, we check that $\mathbf{x}=|C|$. Therefore, $n(M,w) = 1$, $m(M,w)=\max_i |C_i|$, $t_{\setup} = O(|N|)$, $t_{\updatee} = O(1)$, and $t_{\final} = O(1)$. This leads to an $O(t2^{\max_i |C_i|} |N|^2 \max_i |C_i|)$ algorithm for computing the Shapley value for each agent.
\ifshowqed
\hfill \qed 
\fi
\end{proof}

\begin{proof}(Corollary~\ref{cor:topk})
We use the order-agnostic version of Theorem~\ref{thm:gen2} with representation $(M,w,A)$ where $M=N$, $w(i) = w_i$ for all $i$, and \newline $A(S) = \{\textrm{at most $k$ top agents by weight in $S$}\}$. For simplicity, we assume the weights are all different; otherwise, we break ties consistenetly. For each $e\in M$, we construct a per-element decomposition $(A_{\setup}, A_{\updatee}, A_{\final})$, omitting the superscript $e$. As variables $\mathbf{x}$, we use a counter that counts the number of agents with weight greater than agent $e$'s, capped at $k$. In $A_{\setup}$, we initialize the counter $\mathbf{x}$ to 0. In $A_{\updatee}$, for given $i\in S$, we increment the counter if $w_i>w_e$, capping the counter at $k$. In $A_{\final}$, we output 1 if $\mathbf{x}<k$, or 0 otherwise. Therefore, $n(M,w) = 1$, $m(M,w)=k+1$, $t_{\setup} = O(1)$, $t_{\updatee} = O(1)$, and $t_{\final} = O(1)$. This leads to an $O(k |N|^2)$ algorithm for computing the Shapley value for each agent.

If $k\geq |N|$, then $v(S) = \sum_{i\in S} w_i$ for all $S$ and the Shapley value of agent $i$ is simply $w_i$. So, we can compute the Shapley value in $O(1)$ for each agent after reading the representation. If $k < |N|$, then the above algorithm suffices and its running time reduces to $O(|N|^3)$. In either case, we get a polynomial time algorithm.
\ifshowqed
\hfill \qed
\fi 
\end{proof}

\end{document}